\newtheorem{theorem}{Theorem}
\newtheorem{definition}{Definition}
\newtheorem{lemma}{Lemma}
\newtheorem{assumption}{Assumption}
\newtheorem{remark}{\bf Remark}
\def\E{\mathsf{E}}
\def\phi{\varphi}
\def\l{\left}
\def\r{\right}
\def\({\left(}
\def\){\right)}
\def\b0{{\mathbf{0}}}
\newcommand{\nn}{\nonumber}
\begin{document}

\addtolength{\abovecaptionskip}{-2mm}
%

\title{\huge  Stochastic Control of Computation Offloading to a Helper with a Dynamically Loaded CPU}
\author{Yunzheng Tao, Changsheng You, Ping Zhang, and Kaibin Huang
  \thanks{\noindent Y. Tao and P. Zhang are with the State Key Laboratory of Networking and Switching Technology, Beijing University of Posts and Telecommunications, Beijing 100876, China. C. You and K. Huang are with the Department of Electrical and Electronic Engineering, The University of Hong Kong, Hong Kong. Corresponding author: K. Huang (Email: huangkb@eee.hku.hk).}}
\maketitle

\begin{abstract}

Due to densification of wireless networks, there exist abundance of idling  computation resources at (network) edge devices (e.g., access points and handheld computers). These resources can be scavenged by offloading heavy computation tasks from  small IoT devices  (e.g., sensors and wearable computing devices) in proximity, thereby overcoming their limitations and lengthening their battery lives. However, unlike dedicated servers, the \emph{spare} resources offered by edge helpers are \emph{random} and \emph{intermittent}. Thus, it is essential for a user (IoT device) to intelligently control the amounts of data for offloading and local computing so as to ensure a computation task can be finished in time consuming  minimum energy. In this paper, we design energy-efficient control policies in a computation offloading system with a random channel and a helper with a dynamically loaded CPU (due to the primary service). Specifically, the policy adopted by the helper aims at determining the sizes of offloaded and locally-computed data for a given task in different slots such that the total energy consumption for transmission and local CPU is minimized under a task-deadline constraint. As the result, the polices endow an offloading user robustness against channel-and-helper randomness besides balancing  offloading and local computing.  By modeling the channel and helper-CPU as Markov chains, the problem of offloading control is converted into a Markov-decision process. Though \emph{dynamic programming} (DP) for numerically solving the problem does not yield the optimal policies in closed form, we leverage the procedure to quantify the optimal policy structure and apply the result to design optimal or sub-optimal policies. For different cases ranging from zero to large buffers, the low-complexity of the policies overcomes the ``curse-of-dimensionality'' in DP arising from joint consideration of channel, helper CPU and buffer states.
\end{abstract}

\section{Introduction}
Mobile edge computing  has emerged as a promising  technology for realizing the vision of \emph{Internet of Things} (IoT) by reducing computation latency and energy consumption of IoT devices. These advantages are achieved by  \emph{mobile-edge computation offloading} (MECO), a primary operation of edge computing, that offloads  complex  computation from small IoT devices  to nearby edge  helpers (or servers) such as \emph{access points} (APs) and  smartphones \cite{patel2014mobile,mao2017survey}. Among edge helpers, while some are \emph{dedicated} servers, others provide  edge computing \emph{opportunistically} as a secondary service without affecting  their primary functions. For example, the primary function of a base station is to support  radio access and networking and that of a smartphone is   personal computing. In other words, only when not performing the primary function, an opportunistic helper offers its idling CPU for the intermittent use by  an offloading user. Then from the perspective of an offloading user, an opportunistic helper appears to be one with a dynamically loaded CPU. The randomness in both the available computing resources and wireless channel requires a user to adapt the offloading process to the states of the helper and channel. Specifically, by observing the states, the user should control  1) the size of input-data for a particular task to be computed  in the current slot and 2) its partitioning for computing ``\emph{remotely}'' and ``\emph{locally}'', such that the task can be completed in time with the minimum energy consumption.  In this paper, we study this complex problem of optimal MECO control using stochastic optimization theory.

\subsection{Prior Works}
\subsubsection{Mobile-edge computation offloading}
 MECO is a key enabler for a wide range of  computation-intensive and latency-sensitive mobile applications, such as augmented/virtual reality, high-definition video streaming and online gaming. This drives growing research interests in both the academia and industry to design efficient computation-offloading systems and techniques \cite{zhang2013energy,you2016energy,mahmoodi2016optimal,wang2016mobile,ko2017live}. For single-user MECO systems, the energy-efficient \emph{binary} offloading was investigated in \cite{zhang2013energy} where the CPU-cycle frequency and offloading rates were optimized for reducing the energy consumption of local computing and offloading, respectively. This work was extended in \cite{you2016energy} by powering MECO with wireless energy. Subsequently,  \emph{partial offloading} allowing more flexible offloading than binary offloading  has been designed for enhancing the performance of energy savings and latency reduction using diverse techniques such as  adaptive program partitioning \cite{mahmoodi2016optimal, wang2016mobile} and live prefetching based on computation prediction  \cite{ko2017live}. For multiuser MECO systems, the radio-and-computational resource management has been intensively investigated \cite{you2017energy,chen2016efficient,lyu2017multiuser,yang2015multi}. Specifically, an optimization problem was formulated in \cite{you2017energy} to minimize the weighted sum user-energy consumption. The derived optimal centralized resource-allocation policy is shown to have a simple threshold-based structure. Moreover, the distributed energy-efficient resource-allocation for multiuser MECO systems was studied in \cite{chen2016efficient, lyu2017multiuser} by using game theory and decomposition techniques, respectively. In addition, an energy-efficient multiuser MECO scheduling policy was proposed in \cite{yang2015multi} by assigning different mobiles with diverse levels of priorities based on their latency requirements. Last, the latency performance of large-scale MECO networks was analyzed in \cite{ko2017wireless}  using stochastic geometry.

Recently, \emph{peer-to-peer} (P2P) computation offloading between mobile devices has emerged to be  an active research area \cite{song2014energy,li2014exploring,chen2017exploiting,ti2017computation,xu2016less,cao2017joint,you2017exploiting}. The technology can enhance   the computing capabilities of mobiles by sharing as well as equalizing  the uneven distribution of computation workloads and computation resources over heterogeneous devices. Specifically, an energy-efficient peer computation offloading framework was proposed in \cite{song2014energy}, where mobiles cooperatively compute tasks and share computation results to balance the energy consumption of mobile users. From the perspective of wireless transmission, P2P computation offloading can be implemented by \emph{device-to-device} (D2D) communications \cite{li2014exploring,chen2017exploiting,ti2017computation}. The interplay between user incentives and interdependent security risks in D2D offloading was investigated in \cite{xu2016less} by leveraging tools from game theory and epidemic theory. Last, a protocol for  helper-assisted computation offloading to a central cloud  was proposed in\cite{cao2017joint} and designed for maximizing user energy efficiency via the  energy-efficient joint computation-and-communication control.


 All of the above prior works  assume dedicated  helpers or servers (see e.g., \cite{you2017energy,ti2017computation}). This, however, overlooks the fact that many helpers in practice are \emph{opportunistic} since their primary functions are not  edge computing. Recently, computation offloading to an opportunistic helper was studied in \cite{you2017exploiting} where the user exploits   \emph{non-causal} information on the helper-CPU state to
 control offloaded data sizes in different slots. Computation prediction for acquiring such information may not be accurate in practice and furthermore places an extra burden on the CPU. In this work, the offloading is controlled instead based on the distribution of the dynamic helper CPU, as such information can be easily inferred from historical data. The corresponding optimal control policy can be derived using stochastic optimization theory.

\subsubsection{Stochastic  computation offloading} The stochastic control of MECO has been extensively studied in the literature targeting a time-varying channel \cite{zhang2015collaborative,liu2016delay,jia2014heuristic,huang2012dynamic,mao2016dynamic,kwak2015dream,chen2017energy}. In the area of wireless communications, energy-efficient  data transmission over dynamic channels under a deadline constraint is a classic topic where the optimal policies can be typically computed using \emph{dynamic programming} (DP) assuming  the Markov-chain channel models (see e.g.,\cite{fu2006optimal}).  However, for stochastic control of  computation offloading, the policy design is  more complicated as it requires the joint control of  local computing and offloading (transmission), accounting for different types of  dynamics arising in channels, computation tasks and helper CPUs. The challenge has been tackled recently by a series of research \cite{zhang2015collaborative,liu2016delay,jia2014heuristic,huang2012dynamic,mao2016dynamic,kwak2015dream, chen2017energy}. In \cite{zhang2015collaborative}, considering the joint mobile-and-cloud task scheduling,  the problem of optimal offloading control is formulated as a shortest path problem and solving it reveals the ``one-climb" policy structure (i.e., the tasks should only migrate at most once between the mobile and the cloud). In the presence of random channel and computation-data arrivals, a stochastic offloading algorithm for controlling offloading decision and rate was designed in \cite{liu2016delay} using the approach of \emph{Markov decision process} (MDP). Accounting for different task-call graphs, a set of online task offloading algorithms were proposed in \cite{jia2014heuristic} to minimize the completion time for sequential or concurrent tasks. Another line of research assumes that only the \emph{instantaneous} information (e.g., random channel, data arrival and renewable energy) is available at the user and applies the  Lyapunov optimization techniques to design the control policies for  controlling computation offloading\cite{huang2012dynamic,mao2016dynamic}, integrated with the CPU-frequency control for local computing {\cite{kwak2015dream} or server on/off switching \cite{chen2017energy}.

Again, the prior works assume dedicated edge servers and do not consider opportunistic helpers that widely exist in practice. Accounting for the corresponding randomness in computation resources in addition to channel dynamics  complicates the stochastic offloading control.  This work thus aims to bridge this gap by investigating stochastic offloading control targeting both dynamic channel and dynamic helper CPU.

\subsection{Contributions}

To the best of the authors' knowledge, this paper represents the first attempt on investigating the optimal stochastic control of MECO to an \emph{opportunistic} helper. In particular, the work addresses two new design issues that are not addressed in the literature focusing on dedicated helpers.

\begin{itemize}

\item \emph{(Channel-and-helper dual opportunism)} The dual opportunism refers to exploiting both the channel temporal diversity and the  random computation resources at the helper for reducing transmission-energy and computing-latency, respectively. Note that  the intermittent  computation resources at the helper can only be utilized in \emph{real time} but not earlier or later. Then  the offloading control policy  should be designed to balance the offloading of sufficient  input-data prior to the availability of random helper CPU and delaying offloading so as to exploit channel temporal diversity. Thus the policy design is more complex than the conventional ones without helper randomness (see e.g., \cite{li2014exploring,chen2017exploiting,ti2017computation}).

\item \emph{(Curse of dimensionality)} The second issue  is the dramatic increasing in the complexity of computing offloading control policy due to a larger state space including both the channel, helper CPU and helper buffer, which is the effect of the well-known curse-of-dimensionality in stochastic optimization. The problem is even more complicated given an adjustable CPU frequency for local computing, as considered in the current work. In contrast, given non-causal information on the helper state, the problem of optimal offloading control in \cite{you2017exploiting} is a convex problem and much simpler to solve.
\end{itemize}

To address these issues,  we consider an MECO system where a user is assisted by an opportunistic helper  to compute a fixed size  of input-data before a deadline. To reduce energy consumption, the user allocates the data  over different slots for computation and furthermore partitions the per-slot data for local computing with an adjustable local-CPU frequency  and offloading to an opportunistic helper. Both the dynamic channel and helper-CPU states are modeled as Markov chains. The models allow the approach of transforming the considered offloading control into stochastic optimization problems. The main results are summarized as follows.

\begin{itemize}
  \item[1)] \emph{Opportunistic helper  without a buffer}: First, consider the case where the helper has no buffer for storing the offloaded data from the user. We formulate a stochastic  optimization problem for  minimizing  the expected user-energy consumption and derive the optimal offloading policy for controlling data partitioning (for local computing and offloading) in different slots. The result shows that, in each slot, both the locally-computed and offloaded data sizes are \emph{proportional} to the remaining un-computed data size with the scaling factors  depending on the instantaneous channel and helper-CPU states.

  \item[2)] \emph{Opportunistic helper  with a large buffer}: Solving the problem of stochastic offloading control for the case of a large helper buffer is mathematically intractable. To overcome the difficulty, we derive an \emph{approximated} function for the expected future user-energy consumption. Based on this, we propose an effective sub-optimal computation offloading policy, which is shown to be close-to-optimal in simulations. Interestingly, the resultant  policy has the similar  structure as that of zero-buffer counterpart but is much simpler to compute.

  \item[3] \emph{Opportunistic helper  with a small buffer}: Last, assume that the helper has a small buffer. The introduced buffer constraint renders the derivation for the optimal computation offloading policies intractable. To address this challenge, we propose a practical offloading control  scheme that switches between two  candidate policies  for the earlier  cases of zero buffer and  large buffer. The switching threshold can be  computed by a simple  bisection-search. Despite its simplicity, the proposed scheme achieves close-to-optimality in terms of energy savings as demonstrated by simulation results.

\end{itemize}

The rest of this paper is organized as follows. The system models and problem formulation are presented in Section~\ref{sec: Mathematical Model} and Section~\ref{Sec:Problem Formulation}, respectively. The optimal and suboptimal policies for controlling computation offloading are designed for three cases, namely zero, large and small buffers, in Sections~\ref{Sec:ContWoBuf}, \ref{Sec:LargeBuffer}, and \ref{sec:small_bf}, respectively. Section~\ref{sec: simulation results} provides simulation results, followed by concluding remarks in Section~\ref{sec: Concluding Remarks}.

\section{System Model}\label{sec: Mathematical Model}
Consider an MECO system shown in Fig.~\ref{Fig:sys}, consisting of one user and one opportunistic helper both equipped with a single-antenna. The user is required to compute $D$-bit input-data with $1$-bit data-size unit before a deadline $T$, which is divided into $K$ slots, each with a duration of $t_0=T/K$. Controlled by the helper, the user allocates data for computing in each slot and further partitions the data for simultaneous local computing and computation offloading to the helper. Models are described as follows.

\begin{figure}[t]
  \centering
  \includegraphics[height=5.3cm]{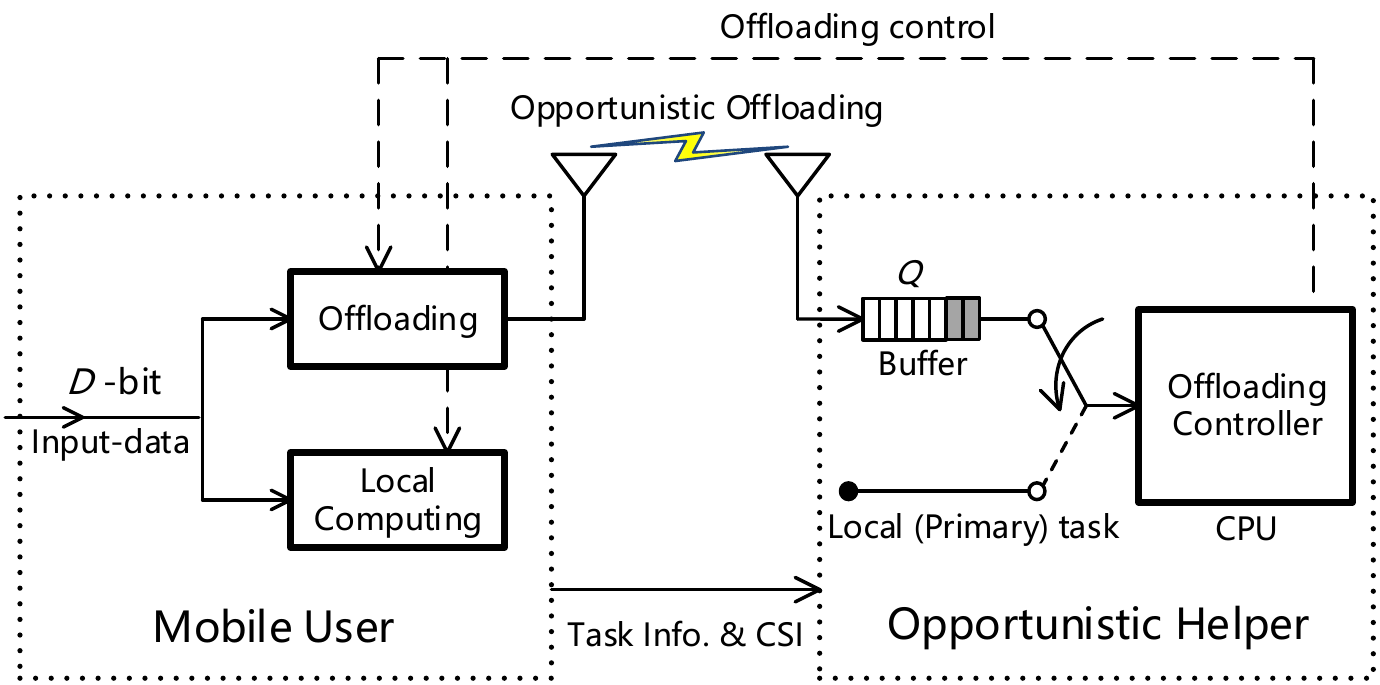}\\
  \caption{Computation offloading system with an opportunistic helper.}\label{Fig:sys}
\end{figure}

\subsection{helper CPU Model}\label{Sec:CPU}
The opportunistic helper is assumed to operate at a constant frequency. Due to intermittent primary tasks, the helper-CPU states can be modeled by a random process denoted by $\{C_1,\cdots, C_K\}$. In each slot $k$, the helper-CPU state $c_k\in\mathcal{C}\overset{\triangle}{=}\{0,1\}$, where $c_k=0$ and $c_k=1$ denote the busy and idle states, respectively.
\begin{assumption}[helper CPU Dynamics]\label{As:CPU}\emph{As shown in Fig.~\ref{Fig:helper-CPU}, the process of the random helper-CPU states, $\{C_k\}$ for $k=1, \cdots, K$, is a finite-state stationary Markov chain.}
\end{assumption}
This assumption means that for any slot $k$, the current helper-CPU state $C_{k}$ only depends on the previous random state $C_{k-1}$ and is independent of the past states $\{C_1,\cdots, C_{k-2}\}$. Let $P_{00}$ and $P_{11}$ denote the busy-to-busy and idle-to-idle transition probabilities, respectively. Then the busy-to-idle and idle-to-busy transition probabilities, denoted by $P_{01}$ and $P_{10}$, are: $P_{01}=1-P_{00}$ and $P_{10}=1-P_{11}$, respectively. Last, the helper is assumed to reserve a $Q$-bit buffer for storing the offloaded data before processing in the CPU.
\begin{figure}[t!]
  \centering
  \subfigure[Helper-CPU model]{
  \begin{minipage}{6cm}
  \centering
  \includegraphics[width=6cm]{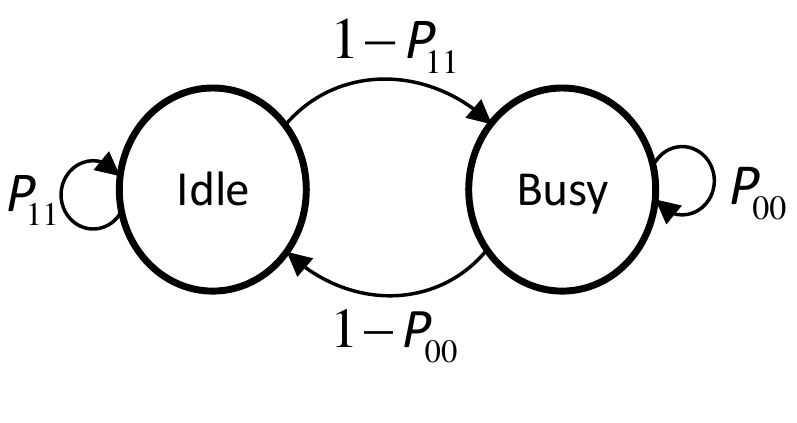}\label{Fig:helper-CPU}
  \end{minipage}
  }
  \subfigure[Channel model]{
  \begin{minipage}{6cm}
  \centering
  \includegraphics[width=6cm]{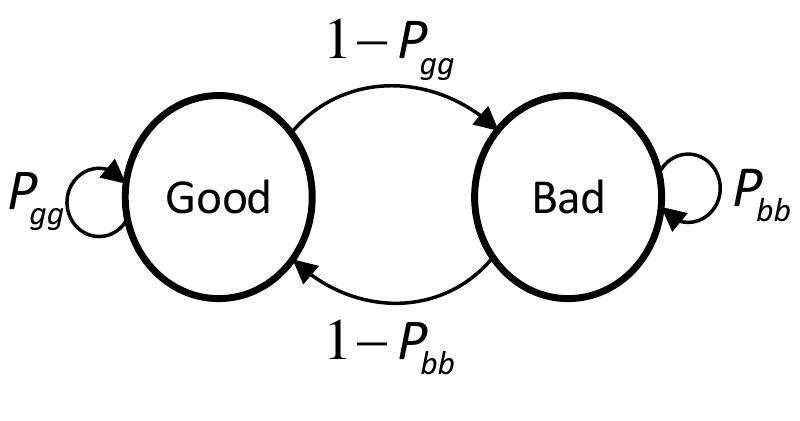}\label{Fig:channel}
  \end{minipage}
  }
  \caption{Markov-chain models of helper-CPU and channel.}\label{Fig:Helper_channel_prf}
\end{figure}
\subsection{Models of Local Computing and Computation Offloading}\label{Sec:ModLocOff}
Let $L_k$ denote the remaining data size for processing at the beginning of slot $k$. In this slot, the user offloads $u^{(\rm{of})}_k$-bit data to the helper and computes $u^{(\rm{lo})}_k$-bit data using its local CPU.

First, consider local computing. Let  $f_k$ denote the adjustable CPU-cycle frequency of the user during slot $k$, and $w$ the number of CPU cycles required for computing $1$-bit input-data at the user. Given a fixed data size for computing in a slot, operating at a constant CPU-cycle frequency within the slot is most energy-efficient for local computing \cite{prabhakar2001energy}. Thus the adjustable CPU-cycle frequency of the user is chosen as $f_k=w u_k^{(\rm{lo})}/t_0$.  Following the practical model in \cite{chandrakasan1992low}, the energy consumption of each CPU cycle can be modeled by $E^{(\rm{cyc})}=\gamma f_k^2$, where $\gamma$ is a constant depending on the circuit. The energy consumption for local computing during slot $k$, denoted by $E_k^{(\rm{lo})}$, follows
\begin{equation}
\text{(Local-computing energy consumption)}\quad E_k^{(\rm{lo})}=w \gamma f_k^2 u_k^{(\rm{lo})}= \alpha (u_k^{(\rm{lo})})^3,
\end{equation} where $\alpha=\gamma w^3/t_0^2$.

Next, consider computation offloading. Let $\{H_1, \cdots, H_K\}$ denote the random process of the channel (power) gain between the user and helper. The channel state is modeled by the widely-used Gilbert-Elliott model (see e.g., \cite{zafer2009minimum}), where the channel with the power gain above or below a given threshold is labeled as ``good" or ``bad", with the average channel power gains denoted by $g$ and $b$, respectively. Mathematically, $h_k\in\mathcal{H}\overset{\triangle}{=}\{g,b\}$.
\begin{assumption}[Channel Dynamics]\label{As:Channel}\emph{As shown in Fig.~\ref{Fig:channel}, the process of the random channel states, $\{H_k\}$ for $k=1,\cdots, K$, is a finite-state stationary Markov chain.}
\end{assumption}
\begin{figure}[t!]
  \centering
  \subfigure{
  \begin{minipage}{5.5cm}
  \centering
  \includegraphics[width=5.5cm]{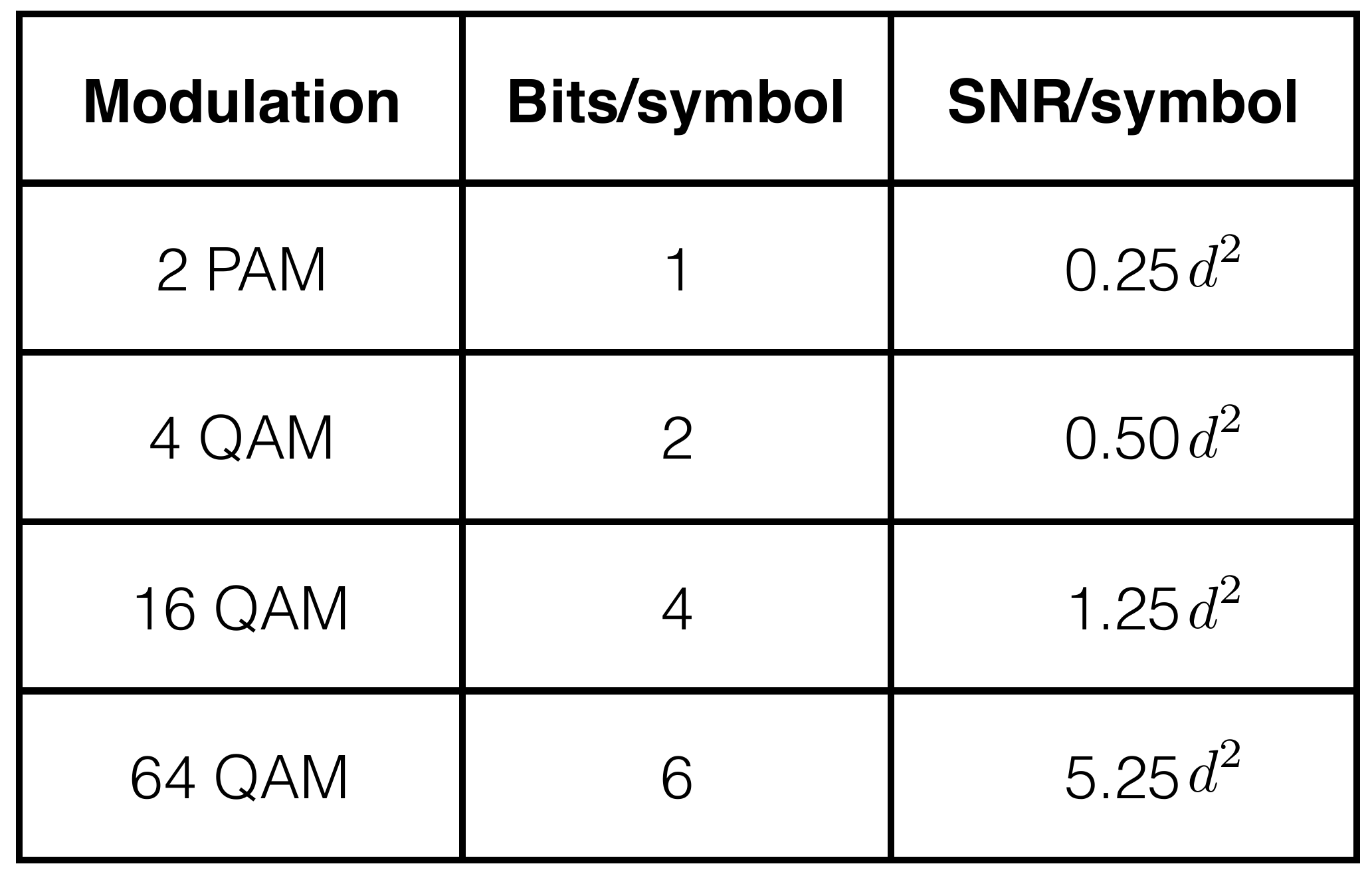}
  \end{minipage}
  }
  \subfigure{
  \begin{minipage}{5.5cm}
  \centering
  \includegraphics[width=5.5cm]{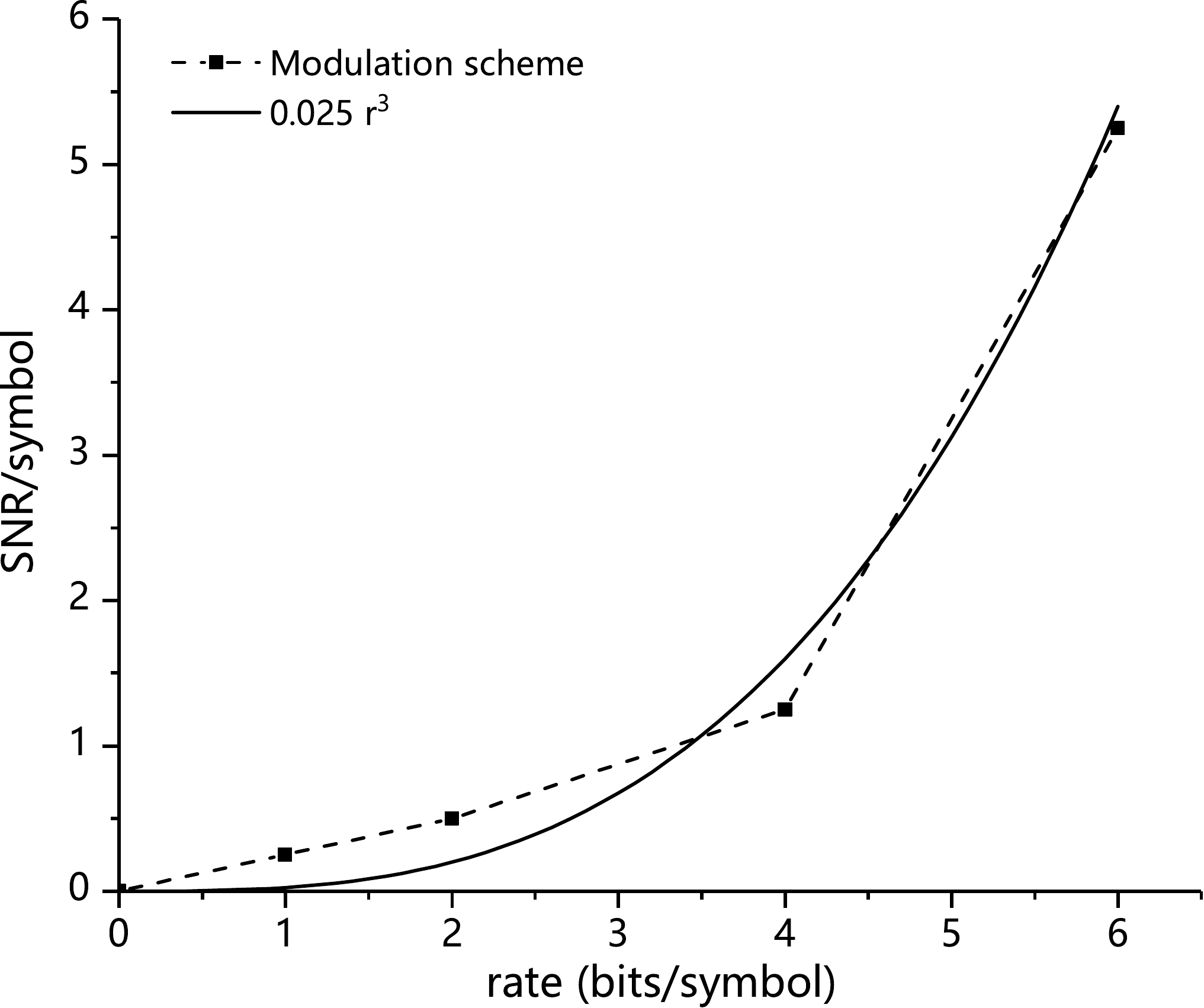}
  \end{minipage}
  }
  \caption{Modulation scheme given in the table is considered in \cite{neely2005dynamic}, where $d$ represents the minimum distance between signal points and SNR is short for the signal-to-noise ratio. The corresponding plot shows $0.025r^3$ to the scaled piecewise linear power-rate curve.}\label{Fig:Monomial_test}
\end{figure}
Let $P_{gg}$ and $P_{bb}$ denote the transition probability from good-to-good and bad-to-bad respectively, then the transition probabilities from good-to-bad and from bad-to-good, denoted by $P_{gb}$ and $P_{bg}$, are given as: $P_{gb}=1-P_{gg}$ and $P_{bg}=1-P_{bb}$, respectively. Following the empirical model in the literature  \cite{zafer2009minimum,zhang2013energy,ko2017live}, the transmission power in slot $k$, denoted by $p_k$, can be modeled by a \emph{monomial} function with respect to the achievable transmission rate $r_k$ (in bits/s):
\vspace{-5pt}
\begin{equation}\label{Eq:EgyOff}
\text{(Monomial offloading power)}\quad p_k=\lambda_0 \frac{(r_k)^m}{h_k},
\end{equation}
where $\lambda_0$ denotes the energy coefficient incorporating the effects of bandwidth and noise power, and $m$ is the monomial order determined by the modulation-and-coding scheme. This monomial order is a positive integer, taking on values in the typical range of $2\le m\le 5$. Specifically, considering the coding scheme for the targeted error probability less than $10^{-6}$  \cite{neely2005dynamic}, the monomial order of $(m=3)$ can approximate transmission energy consumption as shown in Fig.~\ref{Fig:Monomial_test}. Then for tractability, the offloading energy consumption in slot $k$ can be modeled by the following monomial function of the offloaded data size $u_k^{(\rm{of})}$:
\vspace{-5pt}
\begin{equation}\label{Eq:EgyOff}
\text{(Monomial offloading energy consumption)}~ E_k^{(\rm{of})}=p_k t_0=
 \lambda \frac{(u_k^{(\rm{of})})^3}{h_k},
\end{equation}
where $\lambda=\lambda_0/t_0^{2}$.

\subsection{Model of Opportunistic Computation Offloading}
Assume that the random helper CPU and channel states are \emph{independent} and the helper has the knowledge of channel-and-CPU distributions as given in Assumption~\ref{As:CPU} and \ref{As:Channel}. Using the user's information by feedback including the input-data size $D$, computation deadline $T$ and local-computing parameters $\{w, \gamma\}$, the helper computes the control policy and then controls the offloading process so as to minimize user-energy consumption.
\begin{assumption}[Causal State Information]\emph{The helper has \emph{causal} information of the helper CPU and \emph{channel state information} (CSI), where CSI is obtained by feedback.}
\end{assumption}

The opportunistic computation offloading is elaborated as follows. Consider slot $k$. At the beginning of this slot, the user observes the helper CPU and channel states, and makes decisions on the locally-computed and offloaded data sizes in this slot. The remaining data size in the next slot, denoted by $L_{k+1}\in \mathcal{L}\overset{\triangle}{=}\{1,\cdots, D\}$, evolves as:
\begin{equation}\label{Eq:RemainData}
\text{(Remaining data size)}\quad L_{k+1}=L_k-u_k^{(\rm{of})}-u_k^{(\rm{lo})}, \qquad \forall k=1, \cdots, K-1,
\end{equation}
with $L_1=D$. For the offloaded data, it is assumed to be first stored in the helper buffer and then fetched to the helper CPU for computing. Let $Q_k\in \mathcal{Q}\overset{\triangle}{=}\{1,\cdots, Q\}$ denote the buffered data size at the beginning of slot $k$. Assume that the helper CPU has \emph{unlimited}   computation capacity, such that it can compute all the buffered data if it is idle (not occupied by a primary task). Mathematically, if $c_k=1$,  it has $Q_{k+1}=0$. Otherwise, if the helper CPU is busy (i.e., $c_k=0$), all the newly-offloaded data is stored in the buffer in the next slot. Then the buffered data size evolves as follows:
\vspace{-5pt}
\begin{equation}\label{Eq:Buffer}
\text{(Buffered data size)}\quad Q_{k+1}=
\begin{cases}
  Q_{k}+u_k^{(\rm{of})}, &c_k=0,\\
0, &c_k=1,
\end{cases}\qquad \forall k=1,  \cdots, K-1,
\end{equation}
with $Q_1=0$,  which can be rewritten as $Q_{k+1}\!=\!(Q_k+u_k^{(\rm{of})})(1-c_k)$.   To guarantee that all the offloaded data is computed by the deadline, we propose a \emph{demand-computing} scheme as follows.
\begin{definition}[Demand-Computing]\emph{In the last slot $K$, if the helper CPU is busy, demand-computing refers to the scheme at the user that receives feedback from the helper on the un-computed data as offloaded by the user and then computes it completely using the local CPU by adjusting its frequency.}
\end{definition}

\vspace{-10pt}
\section{Problem Formulation}\label{Sec:Problem Formulation}
In this section, the energy-efficient stochastic control of opportunistic computation offloading is formulated as a finite-horizon MDP problem for minimizing the expected user-energy consumption. Solving the problem yields policies for the helper to control the offloading process.

 The formulated problem consists of the following components.
\subsubsection{State space}\label{Def:State_spc}
Let $x_k=(x_k^{(C)},x_k^{(H)},x_k^{(L)},x_k^{(Q)})$ denote the system state in each slot $k$, where $x_k^{(C)}$, $x_k^{(H)}$, $x_k^{(L)}$ and $x_k^{(Q)}$ redenote the helper-CPU state $c_k$, channel state $h_k$, remaining data size $L_k$, and buffered data size $Q_k$, respectively. Then system state space is the product space $\mathcal{X}=\mathcal{C}\times\mathcal{H}\times\mathcal{L}\times\mathcal{Q}$.

\subsubsection{Action space}\label{Def:ContSpace}
Let $u_k=(u_k^{(\rm{of})},u_k^{(\rm{lo})})$ denote the action in slot $k$ with  $u_k^{(\rm{of})}\ge 0$ and $u_k^{(\rm{lo})}\ge 0$. The action space is $\mathcal{L}\times\mathcal{L}$.  Given a specific system state $x_k$, the \emph{feasible} action $u_k$ depends on the state, that is, $u_k\in \mathcal{U}(x_k)$ for all $x_k\in \mathcal{X}$ where $\mathcal{U}(x_k)$ is the space of all feasible actions in slot $k$ satisfying the following constraints.
First,  the data constraints require:
\vspace{-5pt}
\begin{equation}\label{Eq:DataCons}
\text{(Data constraints)}~~u_k^{(\rm{of})}+u_k^{(\rm{lo})}\le x_k^{(L)}, \quad\forall k=1,\cdots,K-1.
\end{equation}
Next, in the last slot $K$, the demand-computing enforces the following constraint:
\vspace{-5pt}
\begin{equation}\label{Eq:DemandCon}
\text{(Demand-computing constraint)}~~\begin{cases}
   u_K^{(\rm{of})}=0, ~u_K^{(\rm{lo})}=x_K^{(L)}+x_K^{(Q)},  & ~~~x_K^{(C)}=0, \\
   u_K^{(\rm{of})}+u_K^{(\rm{lo})}= x_K^{(L)},  & ~~~x_K^{(C)}=1.
  \end{cases}
\end{equation}
Last, the finite helper buffer leads to:
\vspace{-5pt}
\begin{equation}\label{Eq:BufferCons}
\text{(Buffer constraints)}~~\begin{cases}
   u_k^{(\rm{of})}\le Q-x_k^{(Q)},  & ~~~x_k^{(C)}=0, \\
   u_k^{(\rm{of})}\le x_k^{(L)},  & ~~~x_k^{(C)}=1,
  \end{cases}
\quad  \forall k.
\end{equation}
Based on the above discussion, the feasible action space can be summarized as
\vspace{-5pt}
\begin{equation}\label{Eq:ContSpace}
\mathcal{U}(x_k)=\begin{cases}
\l\{u_k~\big|~\eqref{Eq:DataCons},\eqref{Eq:BufferCons}, u_k^{(\rm{of})}\ge 0, u_k^{(\rm{lo})}\ge 0\r\}, &~~k=1,\cdots,K-1;\\
\l\{u_k~\big|~\eqref{Eq:DemandCon},\eqref{Eq:BufferCons},u_k^{(\rm{of})}\ge 0, u_k^{(\rm{lo})}\ge 0\r\}, &~~k=K.
\end{cases}
\end{equation}
\subsubsection{State transition probability}
 The state transition probability, denoted by $\Pr(x_{k+1}| x_k, u_k)$, is the probability that the system will be in the state $x_k$ in slot $(k+1)$, given the current state $x_k$ and the action $u_k$. Since $x_k^{(C)},x_k^{(H)},x_k^{(L)}$ and  $x_k^{(Q)}$ are independent, the \emph{controllable} state transition probability can be written as:
\begin{align}\label{Eq:Transition}
&\Pr(x_{k+1}\big| x_k, u_k)=\Pr(x_{k+1}^{(C)}\big|x_k^{(C)})\times \Pr(x_{k+1}^{(H)}\big|x_k^{(H)}) \nn \\
&\quad\times \mathbb{I}\l[x_{k+1}^{(L)}=x_k^{(L)}-u_k^{(\rm{of})}-u_k^{(\rm{lo})}\r]\times \mathbb{I}\l[x_{k+1}^{(Q)}=(x_k^{(Q)}+u_k^{(\rm{of})})(1-x_k^{(C)})\r], ~\forall~ x_{k+1}\in \mathcal{X},
\end{align}
where $\mathbb{I}[\cdot]$ denotes the indicator function. According to \eqref{Eq:RemainData} and \eqref{Eq:Buffer}, the state transition probability is positive only when $x_{k+1}^{(L)}=x_k^{(L)}-u_k^{(\rm{of})}-u_k^{(\rm{lo})}$ and $x_{k+1}^{(Q)}=(x_k^{(Q)}+u_k^{(\rm{of})})(1-x_k^{(C)})$, simultaneously, for which $\Pr(x_{k+1}\big| x_k, u_k)=\Pr(x_{k+1}^{(C)}\big|x_k^{(C)})\times \Pr(x_{k+1}^{(H)}\big|x_k^{(H)})$.
\subsubsection{Cost function, control rule and policy}
The cost function is specified by the function $R(x_k, u_k)$, giving the total energy consumption in each slot $k$:
\begin{equation}\label{Eq:PerCost}
(\text{Cost function})\quad R(x_k, u_k)=\alpha (u_k^{(\rm{lo})})^3 +\lambda \frac{(u_k^{(\rm{of})})^3}{x_k^{(H)}}.
\end{equation}
Let the function $\pi_k$ denote the control rule that maps state $x_k$ to the action $u_k=\pi_k(x_k)$ such that $\pi_k(x_k)\in \mathcal{U}(x_k)$. An \emph{admissible} policy is a sequence of control rules in each slot, $\boldsymbol{\pi}=\{\pi_1, \cdots, \pi_K\}$. The set of all admissible policies is denoted by $\boldsymbol{\Pi}$.

Then given any initial state $x_1$, the optimization problem for the minimum expected sum user-energy consumption can be formulated as:
\begin{equation*}
\textbf{(P1)}\qquad
\begin{aligned}
\min_{\boldsymbol{\pi}\in\boldsymbol{\Pi}} ~ &\E \l[ \sum_{k=1}^K  R(X_k, \pi_k(X_k)) \bigg| x_1\r],
\end{aligned}
\end{equation*}
where the expectation is taken with respect to the system state vector $\boldsymbol{X}_k=\{X_1, \cdots, X_K\}$.  Denote  the minimum expected sum user-energy consumption as $J^{*}(x_1)$. To solve Problem P1, note that the optimization for $\pi_k(x_k)$ in different slots cannot be performed independently, since the action in slot $k$ affects the system transition probability in subsequent slots [see \eqref{Eq:Transition}] and thus the future energy cost. This difficulty is overcome in the following lemma according to the \emph{Principle of Optimality} \cite{bertsekas1995dynamic}.
\begin{lemma}[DP Approach]\emph{For any initial state $x_1=(x_1^{(C)},x_1^{(H)}, D, 0)$ where $x_1^{(C)}\in\mathcal{C}$ and $x_1^{(H)}\in\mathcal{H}$, the minimum expected sum user-energy consumption  $J^{*}(x_1)$ is equal to $J_1(x_1)$, which can be recursively solved based on the Bellman Equation, starting from $J_K(x_K)$, $J_{K-1}(x_{K-1})$ to $J_1(x_1)$ by evaluation all possible states in each slot $k=1, \cdots, K$. Specifically, $J_k(x_k)$ denotes the \emph{cost-to-go} function that gives the minimum expected user-energy cost from slot $k$ to $K$, which is defined as:
\begin{equation}\label{eq:dp_no_contr}
   J_k(x_k) = \begin{cases}
   \min \limits_{u_k \in \mathcal{U}(x_k)} \l\{ R(x_k, u_k) +  \E\l[ J_{k+1}(X_{k+1})|x_k,u_k \r]\r\}, & k < K, \\
   \min \limits_{u_K \in \mathcal{U}(x_K)} R(x_K, u_K),  & k=K,
  \end{cases}
\end{equation}
where $\mathcal{U}(x_k)$ is defined in \eqref{Eq:ContSpace}.
Furthermore, if $u_k^*=\pi_k^*(x_k)$ is the optimal control rule for solving  \eqref{eq:dp_no_contr} for each $x_k$ and $k$, the optimal policy for solving Problem P1 is given by $\boldsymbol{\pi}^*=\{\pi_1^*, \cdots, \pi_K^*\}$.
}
\end{lemma}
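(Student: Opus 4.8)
The plan is to prove the lemma by backward induction on the slot index $k$, which is exactly the argument underlying the Principle of Optimality for a finite-horizon MDP. Since the admissible set $\boldsymbol{\Pi}$ already consists of state-feedback policies $\{\pi_1,\dots,\pi_K\}$, no history-dependent policies need be considered, and for each $k$ and each state $x_k\in\mathcal{X}$ I would introduce the tail value
\[
V_k(x_k)\bydef\inf_{\{\pi_k,\dots,\pi_K\}}\ \E\!\left[\ \sum_{j=k}^{K}R\!\left(X_j,\pi_j(X_j)\right)\ \Big|\ X_k=x_k\right],
\]
where the infimum runs over all admissible tail policies (sequences of control rules $\pi_j(\cdot)\in\mathcal{U}(\cdot)$, $j=k,\dots,K$) and the process is initialized at $X_k=x_k$. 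By the definition of Problem~P1 one has $V_1(x_1)=J^{*}(x_1)$, so it suffices to show $V_k\equiv J_k$ for every $k$ and to exhibit a policy attaining $J_1(x_1)$. The base case $k=K$ is immediate: there is no future cost, $\mathcal{U}(x_K)$ is the nonempty set cut out by \eqref{Eq:DemandCon}--\eqref{Eq:BufferCons}, and $V_K(x_K)=\min_{u_K\in\mathcal{U}(x_K)}R(x_K,u_K)=J_K(x_K)$, the minimum being attained because $R$ is continuous (a cubic polynomial) on a compact set.

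For the inductive step, assume $V_{k+1}\equiv J_{k+1}$. Fix $x_k$ and any admissible tail policy $\{\pi_k,\dots,\pi_K\}$, and write $u_k=\pi_k(x_k)$. Because the helper-CPU and channel states evolve as independent Markov chains (Assumptions~\ref{As:CPU}--\ref{As:Channel}) while $L_k$ and $Q_k$ evolve deterministically under the action through \eqref{Eq:RemainData}--\eqref{Eq:Buffer}, the controlled state process is Markov, so conditioning on $X_{k+1}$ and using the tower property gives
\[
\E\!\left[\sum_{j=k}^{K}R(X_j,\pi_j(X_j))\,\Big|\,x_k\right]=R(x_k,u_k)+\E\!\left[\,\E\!\left[\sum_{j=k+1}^{K}R(X_j,\pi_j(X_j))\,\Big|\,X_{k+1}\right]\,\Big|\,x_k,u_k\right].
\]
The inner conditional expectation is the tail cost of the admissible tail policy $\{\pi_{k+1},\dots,\pi_K\}$ started from $X_{k+1}$, hence is at least $V_{k+1}(X_{k+1})=J_{k+1}(X_{k+1})$; thus the left-hand side is at least $R(x_k,u_k)+\E[J_{k+1}(X_{k+1})\,|\,x_k,u_k]\ge J_k(x_k)$, and taking the infimum over tail policies yields $V_k(x_k)\ge J_k(x_k)$. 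For the reverse inequality, let $u_k^{*}=\pi_k^{*}(x_k)$ attain the minimum defining $J_k(x_k)$ in \eqref{eq:dp_no_contr}, and for slots $j>k$ use a tail policy attaining $V_{k+1}=J_{k+1}$; evaluating the cost of this composite policy from $x_k$ via the same decomposition returns exactly $R(x_k,u_k^{*})+\E[J_{k+1}(X_{k+1})\,|\,x_k,u_k^{*}]=J_k(x_k)$, so $V_k(x_k)\le J_k(x_k)$. This closes the induction; specializing to $k=1$ gives $J^{*}(x_1)=J_1(x_1)$, and stitching together the per-slot minimizers produces the optimal policy $\boldsymbol{\pi}^{*}=\{\pi_1^{*},\dots,\pi_K^{*}\}$.

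Beyond this essentially routine recursion, the two points I would treat with care are feasibility and attainment of the minimizers, and the latter is the one I expect to be the only genuinely delicate step. For feasibility, I would verify $\mathcal{U}(x_k)\neq\emptyset$ at every reachable state: in slots $k<K$ the null action $u_k=(0,0)$ always satisfies \eqref{Eq:DataCons}, \eqref{Eq:BufferCons} and the nonnegativity constraints, while in slot $K$ the demand-computing scheme, combined with the unbounded local-CPU frequency (finite cubic energy at any frequency), makes \eqref{Eq:DemandCon}--\eqref{Eq:BufferCons} satisfiable no matter how large the residual $x_K^{(L)}+x_K^{(Q)}$ is; hence P1 is feasible and the recursion never gets stuck. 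For attainment, the minimum in the Bellman recursion \eqref{eq:dp_no_contr} is taken of the function $u_k\mapsto R(x_k,u_k)+\E[J_{k+1}(X_{k+1})\,|\,x_k,u_k]$, which is continuous (the cubic cost plus an expectation whose transition weights are independent of $u_k$), over the compact feasible set $\mathcal{U}(x_k)$; so the minimum is attained, $J_k$ is well-defined and continuous, and this continuity is precisely what the induction hypothesis consumes and what legitimizes defining the optimal control rules $\pi_k^{*}$ pointwise.
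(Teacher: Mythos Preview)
Your proposal is correct and is precisely the standard backward-induction proof of the Principle of Optimality for finite-horizon MDPs. The paper itself does not give a proof of this lemma at all; it simply invokes the Principle of Optimality with a citation to Bertsekas, so your argument in fact supplies what the paper omits rather than diverging from it.
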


Denote the problem for deriving $J_1(x_1)$ [or $J^*(x_1)$] as Problem P2. The resultant optimal computation-offloading policy for solving Problem P2 can be computed numerically, but the result yields little insight into the policy structure, which is important for practical policy computation and design. Thus, we quantify the policy structure and develop low-complexity schemes for offloading control in the following sections. The analysis involves solving convex problems using the Lagrange method. For simplicity, we assume in the analysis  that data is \emph{continuously divisible}, following a common approach in the literature (see e.g., \cite{zafer2009minimum}). In practice, the derived data sizes are  rounded to bits.

\section{Computation Offloading Control: No Helper Buffer}\label{Sec:ContWoBuf}
In this section, we derive the optimal offloading-control policy for the case where the helper has no buffer for storing the offloaded data. To this end, we first derive the optimal computation offloading policy for the last slot and then build on the result to obtain the optimal policy for multi-slots.

In this case, due to the zero buffer, the state of buffered data size is $x_k^{(Q)}=0, \forall k$. Consider the offloading control in the last slot $K$ given the system state $x_K$.  The user can offload partial data if the helper CPU is idle, but can only perform local computing if the helper CPU is busy. Thus the problem for $J_K(x_K)$ can be simplified as follows.
\begin{equation}
\textbf{(P3)} \quad
   J_K(x_K) = \begin{cases}
   \min \limits_{u_K \in \mathcal{U}(x_K)} \l\{\alpha (u_K^{(\rm{lo})})^3+\lambda \dfrac{(u_K^{(\rm{of})})^3}{x_K^{(H)}} \r\}, & x_K^{(C)}=1, \\
    \alpha (x_K^{(L)})^3 ,  & x_K^{(C)}=0,
  \end{cases} \notag
\end{equation}
where $\mathcal{U}(x_K)$ is defined in \eqref{Eq:ContSpace}, which can be explicitly written as
\begin{equation}\label{Eq:UK}
\mathcal{U}(x_K)=\{u_K~|~u_K^{(\rm{lo})}\ge0, u_K^{(\rm{of})}\ge0, u_K^{(\rm{lo})}+u_K^{(\rm{of})}=x_K^{(L)}\}.
\end{equation}
It is easy to prove that Problem P3 is a convex optimization problem. Applying the Lagrange method yields the optimal solution given the following lemma.
\begin{lemma}[Optimal Control for Last Slot]\label{Lem:ConK}\emph{For the last slot, the optimal locally-computed and  offloaded data sizes $u_K^{(\rm{lo})*}$ and $u_K^{(\rm{of})*}$ for solving Problem P3 are given by
\begin{equation}
u_K^{(\rm{lo})*}=\left( 1+\sqrt{\frac{\alpha}{\lambda}}x_K^{(C)}\sqrt{x_K^{(H)}}\right )^{-1}\!x_K^{(L)}\quad\text{and}\quad
u_K^{(\rm{of})*}=\l(1+\sqrt{\frac{\lambda}{\alpha x_K^{(H)}}}\r)^{-1}\!x_K^{(L)}x_K^{(C)}.
\end{equation}
The minimum user-energy consumption $J_K(x_K)$ is
\begin{equation}\label{Eq:ConKEgy}
J_K(x_K)=\alpha (x_K^{(L)})^3 \left( 1+\sqrt{\frac{\alpha}{\lambda}}x_K^{(C)}\sqrt{x_K^{(H)}} \right )^{-2}.
\end{equation}}
\end{lemma}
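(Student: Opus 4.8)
The plan is to solve Problem P3 directly by the Lagrange method in the only non-trivial case, namely $x_K^{(C)}=1$ (when $x_K^{(C)}=0$ the demand-computing constraint \eqref{Eq:DemandCon} forces $u_K^{(\rm{of})}=0$, $u_K^{(\rm{lo})}=x_K^{(L)}$, giving the claimed formulas with $x_K^{(C)}=0$ substituted and $J_K(x_K)=\alpha(x_K^{(L)})^3$ immediately). So assume $x_K^{(C)}=1$. First I would observe that, since the objective $\alpha(u_K^{(\rm{lo})})^3+\lambda(u_K^{(\rm{of})})^3/x_K^{(H)}$ is strictly convex on the nonnegative orthant and the feasible set \eqref{Eq:UK} is the simplex $\{u_K^{(\rm{lo})}+u_K^{(\rm{of})}=x_K^{(L)},\ u_K^{(\rm{lo})},u_K^{(\rm{of})}\ge 0\}$, Problem P3 is a convex program with a unique minimizer, and the KKT conditions are necessary and sufficient.

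Next I would eliminate the equality constraint by writing $u_K^{(\rm{lo})}=x_K^{(L)}-u_K^{(\rm{of})}$ and minimizing the resulting one-variable strictly convex function $\phi(u)=\alpha(x_K^{(L)}-u)^3+\lambda u^3/x_K^{(H)}$ over $u\in[0,x_K^{(L)}]$. Setting $\phi'(u)=0$ gives $-3\alpha(x_K^{(L)}-u)^2+3\lambda u^2/x_K^{(H)}=0$, i.e. $\alpha(x_K^{(L)}-u)^2=\lambda u^2/x_K^{(H)}$; taking the positive square root (both sides nonnegative on the feasible interval) yields $\sqrt{\alpha}\,(x_K^{(L)}-u)=\sqrt{\lambda/x_K^{(H)}}\,u$, from which $u_K^{(\rm{of})*}=\bigl(1+\sqrt{\lambda/(\alpha x_K^{(H)})}\bigr)^{-1}x_K^{(L)}$ and correspondingly $u_K^{(\rm{lo})*}=\bigl(1+\sqrt{\alpha x_K^{(H)}/\lambda}\bigr)^{-1}x_K^{(L)}$. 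One checks that this stationary point lies strictly inside $[0,x_K^{(L)}]$, so it is the global minimizer and the nonnegativity multipliers vanish. Unifying the two cases by inserting the factor $x_K^{(C)}\in\{0,1\}$ in the places indicated reproduces the stated closed forms.

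Finally, for the optimal value I would substitute back: using the stationarity relation $\sqrt{\alpha}\,u_K^{(\rm{lo})*}=\sqrt{\lambda/x_K^{(H)}}\,u_K^{(\rm{of})*}$ one gets $\alpha(u_K^{(\rm{lo})*})^3+\lambda(u_K^{(\rm{of})*})^3/x_K^{(H)}=\alpha(u_K^{(\rm{lo})*})^2\bigl(u_K^{(\rm{lo})*}+u_K^{(\rm{of})*}\bigr)=\alpha(u_K^{(\rm{lo})*})^2 x_K^{(L)}$, and plugging in $u_K^{(\rm{lo})*}=\bigl(1+\sqrt{\alpha x_K^{(H)}/\lambda}\bigr)^{-1}x_K^{(L)}$ gives $J_K(x_K)=\alpha(x_K^{(L)})^3\bigl(1+\sqrt{\alpha/\lambda}\,\sqrt{x_K^{(H)}}\bigr)^{-2}$, which also matches $\alpha(x_K^{(L)})^3$ when $x_K^{(C)}=0$ after reinserting the $x_K^{(C)}$ factor. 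I do not anticipate a genuine obstacle here; the only mild care-points are (i) justifying that the interior stationary point is feasible so that the inequality multipliers are zero, and (ii) bookkeeping the two cases $x_K^{(C)}\in\{0,1\}$ consistently so that the single unified formula is correct — in particular verifying that the $u_K^{(\rm{of})*}$ expression, which carries an explicit factor $x_K^{(C)}$, and the $u_K^{(\rm{lo})*}$ expression, whose denominator carries the factor $x_K^{(C)}$, both collapse correctly at $x_K^{(C)}=0$.
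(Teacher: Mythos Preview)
Your proposal is correct and follows essentially the same approach as the paper's proof: separate the cases $x_K^{(C)}=0$ and $x_K^{(C)}=1$, use the first-order optimality condition (after reducing to a single variable via the equality constraint) in the idle case, and then merge the two cases into the unified closed form. Your write-up is simply more explicit than the paper's about convexity, interior feasibility of the stationary point, and the bookkeeping for the $x_K^{(C)}$ factor.
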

\begin{proof}
See Appendix~\ref{app:conK}.
\end{proof}
\begin{remark}[Channel-Aware Proportional Offloading]\emph{
Lemma~\ref{Lem:ConK} shows that when the helper CPU is idle, the offloaded data size is \emph{proportional} to the remaining data size with the scaling factor $\l(1+\sqrt{\lambda/(\alpha x_K^{(H)})}\r)^{-1}$, which is determined by the channel state $x_K^{(H)}$. In particular, the user offloads more data when the channel is in a good state. Moreover, one can observe from \eqref{Eq:ConKEgy} that, compared with no offloading (i.e., only local computing), the current opportunistic offloading can help reduce user's energy consumption by the proportional factor $\left( 1+\sqrt{\alpha x_K^{(H)}/\lambda} \right )^{-2}$ when the helper CPU is idle.}
\end{remark}

Lemma~\ref{Lem:ConK} gives the minimum energy consumption in the last slot, which is expressed in a compact form accounting for  different system states. This key observation facilitates the stochastic computation offloading policy design. Specially, using Lemma~\ref{Lem:ConK} and applying backward induction for solving Problem P2, the optimal offloading policy can be derived as shown below.

\begin{theorem}[Optimal Policy for Multi-Slots]\label{Theo:OptW/oBuf}
\emph{Consider that the helper has no buffer. For $k=1, \cdots K$, the optimal computation offloading policy allocates $u_k^{(\rm{of})*}$ and $u_k^{(\rm{lo})*}$ bits for offloading and local computing, respectively, which satisfy the following relation:
\begin{equation}\label{eq:no_opt_relation}
  \frac{u_k^{(\rm{of})*}}{u_k^{(\rm{lo})*}}=\sqrt{\frac{\alpha}{\lambda}}x_k^{(C)}\sqrt{x_k^{(H)}}.
\end{equation}
Specifically, $u_k^{(\rm{lo})*}$ and $u_k^{(\rm{of})*}$ are given by:
\begin{align}\label{eq:opt_plcy}
\!\!\!\!u_k^{(\rm{lo})*}\!=\!x_k^{(L)}\!\!\l(1\!+\!\frac{1}{\sqrt{S_k(x_k^{(C)}, x_k^{(H)})}}\!+\!\sqrt{\frac{\alpha}{\lambda}}x_k^{(C)}\sqrt{x_k^{(H)}}\!\r)^{-1} \text{and}~
u_k^{(\rm{of})*}\!=\!\sqrt{\frac{\alpha}{\lambda}}x_k^{(C)}\sqrt{x_k^{(H)}}u_k^{(\rm{lo})*},
\end{align}
where $S_k(x_k^{(C)}, x_k^{(H)})$ is defined as
\begin{equation}\label{eq:xi}
 S_k(x_k^{(C)}, x_k^{(H)})\!=\!\begin{cases}
 \begin{split}&\sum\limits_{x_{k+1}^{(C)}\in\{0,1\}}\sum\limits_{x_{k+1}^{(H)}\in\{g,b\}} \Pr(x_{k+1}^{(C)}\big|x_k^{(C)})\Pr(x_{k+1}^{(H)}\big|x_k^{(H)})\\
  &\qquad\quad\l(1+\frac{1}{\sqrt{S_{k+1}(x_{k+1}^{(C)}, x_{k+1}^{(H)})}}+\sqrt{\frac{\alpha}{\lambda}}x_{k+1}^{(C)}\sqrt{x_{k+1}^{(H)}}\r)^{-2}, ~k < K, \end{split} \\
 \infty,  ~\qquad\qquad\qquad\qquad\qquad\qquad\qquad\qquad\qquad\qquad\qquad~\ k=K.
  \end{cases}
\end{equation}
The corresponding minimum expected user-energy consumption is
\begin{equation}\label{eq:J_1}
  J_1(x_1)=\alpha D^3 \l(1+\frac{1}{\sqrt{S_1(x_1^{(C)}, x_1^{(H)})}}+\sqrt{\frac{\alpha}{\lambda}}x_1^{(C)}\sqrt{x_1^{(H)}}\r)^{-2}.
\end{equation}
}
\end{theorem}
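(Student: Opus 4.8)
The plan is to prove the theorem by backward induction on the slot index $k$, establishing the stronger structural statement that the cost-to-go factorizes as $J_k(x_k)=\alpha\,(x_k^{(L)})^3\,A_k(x_k^{(C)},x_k^{(H)})$ with $A_k(c,h)=\bigl(1+1/\sqrt{S_k(c,h)}+\sqrt{\alpha/\lambda}\,c\sqrt{h}\bigr)^{-2}$ and $S_k$ the quantity in \eqref{eq:xi}. First I would note that the zero-buffer assumption forces $x_k^{(Q)}=0$ for every $k$: by \eqref{Eq:BufferCons} we have $u_k^{(\mathrm{of})}=0$ whenever $x_k^{(C)}=0$, and by \eqref{Eq:Buffer} $x_{k+1}^{(Q)}=0$ whenever $x_k^{(C)}=1$, so the buffer coordinate can be dropped throughout and Lemma~1 reduces Problem~P2 to the Bellman recursion \eqref{eq:dp_no_contr}. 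The base case $k=K$ is exactly Lemma~\ref{Lem:ConK}: setting $S_K=\infty$ kills the term $1/\sqrt{S_K}$, so \eqref{Eq:ConKEgy} and the last-slot policy coincide with \eqref{eq:opt_plcy}--\eqref{eq:J_1} evaluated at $k=K$.

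For the inductive step, assume the factorization holds at slot $k+1$. Since the channel and CPU chains evolve independently of the action and of each other [see \eqref{Eq:Transition}] while $L_{k+1}=x_k^{(L)}-u_k^{(\mathrm{of})}-u_k^{(\mathrm{lo})}$ is deterministic given the action, the expected future cost collapses to
\[
\E\!\left[J_{k+1}(X_{k+1})\mid x_k,u_k\right]=\alpha\bigl(x_k^{(L)}-u_k^{(\mathrm{of})}-u_k^{(\mathrm{lo})}\bigr)^3\,S_k\bigl(x_k^{(C)},x_k^{(H)}\bigr),
\]
where $S_k(x_k^{(C)},x_k^{(H)})=\sum_{c',h'}\Pr(c'|x_k^{(C)})\Pr(h'|x_k^{(H)})\,A_{k+1}(c',h')$ is precisely the weighted average in \eqref{eq:xi}. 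Substituting into \eqref{eq:dp_no_contr} and writing $y_1:=u_k^{(\mathrm{lo})}$, $y_2:=u_k^{(\mathrm{of})}$, $y_3:=x_k^{(L)}-u_k^{(\mathrm{of})}-u_k^{(\mathrm{lo})}$, the per-slot subproblem becomes
\[
\min_{y_1,y_2,y_3\ge 0,\ y_1+y_2+y_3=x_k^{(L)}}\ a_1 y_1^3+a_2 y_2^3+a_3 y_3^3,\qquad a_1=\alpha,\ a_2=\frac{\lambda}{x_k^{(H)}},\ a_3=\alpha S_k\bigl(x_k^{(C)},x_k^{(H)}\bigr),
\]
when $x_k^{(C)}=1$, and the same program with the $y_2$-term removed (equivalently $a_2=\infty$) when $x_k^{(C)}=0$. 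This is a convex program (a sum of cubics, each convex on $[0,\infty)$, over a simplex); as in the proof of Lemma~\ref{Lem:ConK} (Appendix~\ref{app:conK}), the Lagrange/KKT stationarity conditions $3a_i y_i^2=\mu$ give the unique minimizer $y_i^*=x_k^{(L)}\,a_i^{-1/2}\big/\sum_j a_j^{-1/2}$, which is automatically interior and hence globally optimal. Reading off $y_1^*$ and the ratio $y_2^*/y_1^*=\sqrt{a_1/a_2}=\sqrt{\alpha/\lambda}\,\sqrt{x_k^{(H)}}$ (and $=0$ when $x_k^{(C)}=0$) yields \eqref{eq:no_opt_relation}--\eqref{eq:opt_plcy}; substituting back gives the optimal value $(x_k^{(L)})^3\bigl(\sum_j a_j^{-1/2}\bigr)^{-2}$. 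A short computation shows $\sum_j a_j^{-1/2}=\alpha^{-1/2}\bigl(1+1/\sqrt{S_k}+\sqrt{\alpha/\lambda}\,x_k^{(C)}\sqrt{x_k^{(H)}}\bigr)$ in both CPU cases, so $J_k(x_k)=\alpha(x_k^{(L)})^3A_k(x_k^{(C)},x_k^{(H)})$, closing the induction. Evaluating at $x_1=(x_1^{(C)},x_1^{(H)},D,0)$ and invoking Lemma~1 finally gives $J^*(x_1)=J_1(x_1)$ as in \eqref{eq:J_1}.

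The routine part is the Lagrange computation, which mirrors Appendix~\ref{app:conK}. The main thing to get right is the structural observation that makes the induction self-sustaining: one must argue that the cost-to-go depends on $L_k$ only through $\alpha L_k^3$ times a coefficient that depends on $(x_k^{(C)},x_k^{(H)})$ alone and not on history --- this uses crucially that $Q_k\equiv 0$, that $L_{k+1}$ is deterministic given the action, and that the $(C,H)$-transition probabilities are action-independent and product-form. A secondary point needing care is the uniform treatment of the $x_k^{(C)}\in\{0,1\}$ cases: when the helper CPU is busy the offloading variable is pinned to zero by \eqref{Eq:BufferCons}, so the subproblem is genuinely two-dimensional, and one must check that its solution still matches the single closed form in \eqref{eq:opt_plcy} --- which it does because the factor $x_k^{(C)}$ neutralizes the offloading branch and the term $a_2^{-1/2}$ drops out of $\sum_j a_j^{-1/2}$.
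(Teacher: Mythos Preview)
Your proposal is correct and follows essentially the same route as the paper's proof in Appendix~\ref{app:opt_data_scheduling}: backward induction on $k$ starting from Lemma~\ref{Lem:ConK}, with the per-slot Bellman subproblem solved via the Lagrange method. The paper carries out the $k=K-1$ step explicitly by splitting into the cases $x_{K-1}^{(C)}=0$ and $x_{K-1}^{(C)}=1$, observes that $J_{K-1}$ has the same cubic-in-$L$ structure as $J_K$, and then appeals to the recursion; your version is a cleaner packaging of the same argument --- you make the induction hypothesis $J_k=\alpha(x_k^{(L)})^3A_k(c,h)$ explicit and treat both CPU cases at once via the weighted-cubic minimizer $y_i^\ast\propto a_i^{-1/2}$ (with $a_2=\infty$ when $c=0$), but no new idea is involved.
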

\begin{proof}
See Appendix~\ref{app:opt_data_scheduling}.
\end{proof}

Theorem~\ref{Theo:OptW/oBuf} shows that in each slot, the optimal computation offloading control is not only determined by the current state $x_k$, but also the future energy cost from slot $(k+1)$ to $K$ by the term of $S_k(x_k^{(C)}, x_k^{(H)})$. Specifically, a smaller $S_k(x_k^{(C)}, x_k^{(H)})$ indicates less expected energy consumption for computing per bit data in future slots. In this case, the user should reduce both the locally-computed and offloaded data sizes. Moreover, if the helper-CPU state is idle in slot $k$, the fraction between the optimal offloaded and locally-computed data sizes is determined by the channel state. Specifically,  ${u_k^{(\rm{of})*}}/{u_k^{(\rm{lo})*}}=\sqrt{\alpha x_k^{(H)}/\lambda}$. This result implies that given the sum processed data in one slot, more data should be offloaded if the local-computing complexity is higher and channel is better, which is aligned with intuition.

\begin{remark}[Low-Complexity Algorithm]
\emph{The computational complexity for the direct DP approach depends on the dimensions of state space, action space and finite horizon. As the dimensions of the state space and action space are $\mathcal{O}(2\times 2\times D)$ and $\mathcal{O}(D^2)$, respectively, the total computational complexity has the order of $\mathcal{O}(16D^2\times D^2 \times K)$, which is impractical for large $K$ and $D$. This curse of dimensionality \cite{bertsekas1995dynamic} is overcome by deriving the closed-form expression for the optimal solution as in Theorem~\ref{Theo:OptW/oBuf}, avoiding the exhausted search for the optimal control in the action space in each slot.  Specifically, the optimal control is directly determined by the observed system state $x_k$ and $S_k(x_k^{(C)}, x_k^{(H)})$, which essentially depends on the dimensions of helper CPU and channel state spaces and the number of slots. Thus the computational complexity for the proposed algorithm is the complexity for calculating $S_1(x_1^{(C)}, x_1^{(H)})$, which is $\mathcal{O}\l(4 K\r)$ and much lower than that of the brute-force DP approach.
 }
\end{remark}

\section{Computation Offloading Control: Large Helper Buffer}\label{Sec:LargeBuffer}
In this section, we consider the design of computation offloading policy for the case where the helper has a large buffer with a size $Q$ sufficient for storing  the maximum offloaded  data $D$: $Q\geq D$. The direct application of iterative DP results in complexity linearly scaling with $Q$. To tackle this challenge, a low-complexity policy with close-to-optimality is designed based on approximating the minimum expected energy cost function and found to have a similar structure as the zero-buffer counterpart studied in the preceding section.

\subsection{Optimal Policy Computation}\label{Sec:LargeOpt}
In this subsection, we derive the procedure for computing the optimal offloading policy using DP, which facilitates the low-complexity policy design in the sequel. To begin with, consider computation offloading in the last slot $K$. Note that compared with the case of no buffer (see Section~\ref{Sec:ContWoBuf}), the current case requires  an additional constraint that, if the helper CPU is busy and $x_K^{(Q)}$-bit uncomputed data is buffered, the demand-computing requires local computing to finish all the remaining computation, yielding the energy consumption of $\alpha (x_K^{(L)}+x_K^{(Q)})^3$. Thus, Problem P3 can be rewritten as
\begin{equation}
 (\textbf{P4}) \quad J_K(x_K)=\begin{cases} \min \limits_{u_K\in \mathcal{U}(x_K)} \l\{\alpha(u_K^{(\rm{lo})})^3+\lambda  \dfrac{(u_K^{(\rm{of})})^3}{x_K^{(H)}} \r\}, & x_K^{(C)}=1, \\
    \alpha (x_K^{(L)}+x_K^{(Q)})^3 ,  & x_K^{(C)}=0, \notag
   \end{cases}
\end{equation}
where $\mathcal{U}(x_K)$ is given in \eqref{Eq:UK}. Following the same procedure as for solving Problem P3 leads to the optimal solution to Problem P4 as follows.

\begin{lemma}[Optimal Policy for Last Slot]\label{Lem:LargK}
\emph{For the $K$-th slot, the optimal locally-computed and offloaded  data sizes for solving Problem P4 are given by:
\begin{align}\label{eq:inf_JK_plcy}
u_K^{(\rm{lo})*}&=\left(1+ \sqrt{\dfrac{\alpha}{\lambda}}x_K^{(C)}\sqrt{x_K^{(H)}} \right )^{-1} \! \l[x_K^{(L)}+(1-x_K^{(C)})x_K^{(Q)}\r],\\
u_K^{(\rm{of})*}&=\l(1+\sqrt{\frac{\lambda}{\alpha x_K^{(H)}}}\r)^{-1}\!x_K^{(C)}x_K^{(L)}.
\end{align}
The corresponding minimum user-energy consumption  is
\begin{equation}\label{eq:inf_JK}
  J_K(x_K)=\alpha \l[x_K^{(L)}+(1-x_K^{(C)})x_K^{(Q)}\r]^3 \left(1+ \sqrt{\frac{\alpha}{\lambda}}x_K^{(C)}\sqrt{x_K^{(H)}} \right )^{-2}.
\end{equation}
}
\end{lemma}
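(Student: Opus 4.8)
The plan is to mirror the proof of Lemma~\ref{Lem:ConK}, handling Problem P4 by a case split on the terminal helper-CPU state $x_K^{(C)}\in\{0,1\}$, and then to repackage the two cases into the single compact expression claimed in the statement.

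First, the busy case $x_K^{(C)}=0$. Here the demand-computing constraint \eqref{Eq:DemandCon} leaves no freedom: it forces $u_K^{(\rm{of})}=0$ and $u_K^{(\rm{lo})}=x_K^{(L)}+x_K^{(Q)}$, which is the unique feasible action (and it trivially satisfies \eqref{Eq:BufferCons} since $u_K^{(\rm{of})}=0$). Substituting into the cost function \eqref{Eq:PerCost} gives $J_K(x_K)=\alpha(x_K^{(L)}+x_K^{(Q)})^3$, matching the second branch of Problem P4.

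Next, the idle case $x_K^{(C)}=1$. Since an idle helper instantly clears its buffer and also computes whatever is newly offloaded, the buffered bits $x_K^{(Q)}$ cost the user nothing, so Problem P4 reduces to exactly Problem P3 with $x_K^{(C)}=1$: minimize $\alpha (u_K^{(\rm{lo})})^3+\lambda (u_K^{(\rm{of})})^3/x_K^{(H)}$ over $u_K^{(\rm{lo})},u_K^{(\rm{of})}\ge 0$ with $u_K^{(\rm{lo})}+u_K^{(\rm{of})}=x_K^{(L)}$. This is convex (the map $t\mapsto t^3$ is convex on $[0,\infty)$) over a nonempty compact segment, so a minimizer exists, and the buffer constraint \eqref{Eq:BufferCons} here reads $u_K^{(\rm{of})}\le x_K^{(L)}$ and is slack. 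Forming the Lagrangian with multiplier $\nu$ for the equality constraint, the nonnegativity constraints are inactive at an interior point (true whenever $x_K^{(L)}>0$), so stationarity gives $3\alpha(u_K^{(\rm{lo})})^2=\nu=3\lambda(u_K^{(\rm{of})})^2/x_K^{(H)}$, hence $u_K^{(\rm{of})}/u_K^{(\rm{lo})}=\sqrt{\alpha x_K^{(H)}/\lambda}$. Combining with the equality constraint yields $u_K^{(\rm{lo})*}=\bigl(1+\sqrt{\alpha/\lambda}\sqrt{x_K^{(H)}}\bigr)^{-1}x_K^{(L)}$ and $u_K^{(\rm{of})*}=\bigl(1+\sqrt{\lambda/(\alpha x_K^{(H)})}\bigr)^{-1}x_K^{(L)}$, and substituting back gives $J_K(x_K)=\alpha(x_K^{(L)})^3\bigl(1+\sqrt{\alpha/\lambda}\sqrt{x_K^{(H)}}\bigr)^{-2}$; alternatively, one simply invokes Lemma~\ref{Lem:ConK} specialized to $x_K^{(C)}=1$.

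Finally, I would verify that the two displays in the statement reproduce both cases: setting $x_K^{(C)}=0$ collapses them to $u_K^{(\rm{lo})*}=x_K^{(L)}+x_K^{(Q)}$, $u_K^{(\rm{of})*}=0$, $J_K=\alpha(x_K^{(L)}+x_K^{(Q)})^3$; setting $x_K^{(C)}=1$ recovers the idle-case formulas above, and the elementary identity $(1+t)^{-1}+(1+t^{-1})^{-1}=1$ with $t=\sqrt{\alpha x_K^{(H)}/\lambda}$ confirms $u_K^{(\rm{lo})*}+u_K^{(\rm{of})*}=x_K^{(L)}$, so the stated action is feasible. There is no genuine obstacle here; the only points needing a line of care are checking that the buffer and nonnegativity constraints are inactive in the idle case, and noting that the degenerate sub-case $x_K^{(L)}=0$ is still covered (both actions and the energy vanish).
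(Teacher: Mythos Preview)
Your proposal is correct and follows essentially the same approach as the paper, which simply states that the result is obtained by ``following the same procedure as for solving Problem P3'' (i.e., the case split on $x_K^{(C)}\in\{0,1\}$ used in Appendix~\ref{app:conK}). Your write-up is in fact more detailed than the paper's, explicitly checking feasibility and the inactivity of the buffer and nonnegativity constraints.
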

Comparing Lemma~\ref{Lem:LargK} and Lemma~\ref{Lem:ConK}, we can observe that offloading policies in the last slot for the two scenarios of large and zero buffers are the same, while the local computing policy for the case of large buffer should compute more data than the zero-buffer counterpart when the helper CPU is busy, due to the demand-computing constraint.

Though the last-slot policy has a closed form, the optimal multi-slot policy is intractable due to the helper buffer. Based on DP, the backward iterative computation of the minimum expected user-energy cost in slot $(K-1)$ is given as
\begin{equation}
 J_{K-1}(x_{K-1}) = \min \limits_{u_{K-1} \in \mathcal{U}(x_{K-1})} \l\{R(x_{K-1}, u_{K-1}) +  \E\l[ J_{K}(X_{K})|x_{K-1},u_{K-1} \r]\r\},
\end{equation} where the action space $\mathcal{U}(x_{K-1})$ is defined in \eqref{Eq:ContSpace}. In particular, using \eqref{Eq:Transition} and Lemma~\ref{Lem:LargK}, the expected conditional energy cost in slot $(K-1)$ is obtained as
\begin{align}\label{Eq:LargEgyK}
&\E[J_K(X_K)|x_{K-1},u_{K-1}] \nn \\
&=\sum\limits_{x_{K}^{(C)}\in\{0,1\}}\sum\limits_{x_{K}^{(H)}\in\{g,b\}} \Pr(x_{K}^{(C)}\big|x_{K-1}^{(C)})\Pr(x_{K}^{(H)}\big|x_{K-1}^{(H)})  \frac{\alpha F(x_{K-1}, u_{K-1}, x_{K}^{(C)})}{\left( 1+\sqrt{\dfrac{\alpha}{\lambda}}x_K^{(C)}\sqrt{x_K^{(H)}} \right)^2},
\end{align}
where the function $F(\cdot)$ is defined as
\begin{equation}\label{EqF}
\!F(x_{K-1}, u_{K-1}, x_{K}^{(C)})\!=\!\l[(x_{K-1}^{(L)}-u_{K-1}^{(\rm{lo})}-u_{K-1}^{(\rm{of})})+(1-x_K^{(C)}) (1-x_{K-1}^{(C)}) (x_{K-1}^{(Q)}+u_{K-1}^{(\rm{of})})\r]^3.\!\!\!\!
\end{equation}
One can observe from \eqref{Eq:LargEgyK} and \eqref{EqF}  that the total computed data in the last slot  depends on the random helper-CPU state $X_K^{(C)}$, which, however, is equal to $(x_{K-1}^{(L)}-u_{K-1}^{(\rm{lo})}-u_{K-1}^{(\rm{of})})$ for the counterpart of zero buffer. This is due to that for the large buffer, the user can offload data in slot $(K-1)$ even when the helper CPU is busy (i.e, $x_{K-1}^{(\rm{C})}=0$) and the buffered data should be computed locally in the last slot if the helper CPU is busy (i.e, $x_{K}^{(\rm{C})}=0$). It is difficult to derive  a closed-form expression for the optimal offloading-control policy for the $(K-1)$-th slot. This is also true for all of the $1$ to $(K-1)$-th slots. Alternatively, computing the optimal policy has to rely on the numerical method of DP, based on iterations similar to that in \eqref{eq:dp_no_contr}. The complexity, however, is high as discussed in the following remark.

\begin{remark}[Curse of Dimensionality]\emph{The computational complexity for the numerical of DP is similar to the case of zero buffer. Due the large buffer (i.e., $Q>D$), the state space of the buffered data size can be bounded as $\mathcal{D}$. Thus, the dimensions of the state space and action space are $\mathcal{O}(2\times 2\times D^2)$ and $\mathcal{O}(D^2)$, respectively. Then the total computational complexity for the optimal policy can be derived as $\mathcal{O}(16D^4\times D^2 \times K)$, which incurs prohibitive computation complexity if $K$ and $D$ are large.}
\end{remark}

\subsection{Sub-Optimal Policy Design}
To address the above complexity issue in computing the  optimal policy, we design a tractable sub-optimal offloading-control policy by approximation the expected future energy-cost function as follows.

\subsubsection{Approximate  Energy Cost Function}
To approximate the expected conditional energy cost
in \eqref{Eq:LargEgyK}, the function $\E[J_K(X_K)|x_{K-1},u_{K-1}]$ is firstly bounded as follows.
\begin{lemma}
\label{Lem:Lb_cost}
\emph{The expected conditional energy-cost function, $\E[J_K(X_K)|x_{K-1},u_{K-1}]$, in \eqref{Eq:LargEgyK} can be lower-bounded as
\begin{equation}\label{eq:lb_cost_large2}
  \E[J_K(X_K)|x_{K-1},u_{K-1}] \ge \alpha \widetilde{F}(x_{K-1},u_{K-1}) S(x_{K-1}^{(C)},x_{K-1}^{(H)}),
\end{equation}
where the two functions  $\widetilde{F}(\cdot)$ and $S(\cdot)$ are defined as
\begin{align}
\widetilde{F}(x_{K-1},u_{K-1})\!&=\!\l[(x_{K-1}^{(L)}\!-\!u_{K-1}^{(\rm{lo})}\!-\!u_{K-1}^{(\rm{of})})+V(x_{K-1}^{(C)})(x_{K-1}^{(Q)}+u_{K-1}^{(\rm{of})})\r]^3, \label{Eq:TiF}\\
S(x_{K-1}^{(C)},x_{K-1}^{(H)})&=\sum_{x_K^{(C)}\in\mathcal{C}}\sum_{x_K^{(H)}\in\mathcal{H}} \Pr(x_K^{(C)}|x_{K-1}^{(C)})\Pr(x_K^{(H)}|x_{K-1}^{(H)})\left( \sqrt{\frac{\alpha}{\lambda}}x_K^{(C)}\sqrt{x_K^{(H)}}+1 \right )^{-2},\label{eq:Large_xi_K-1}
 \end{align}
and the function $V(\cdot)$ is defined as
\begin{equation}\label{Eq:V}
V(x_{K-1}^{(C)})= (1-x_{K-1}^{(C)})P_{00}.
\end{equation}
The equality in \eqref{eq:lb_cost_large2} holds if the user has non-causal information of channel and helper-CPU states.
}
\end{lemma}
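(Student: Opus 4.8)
The plan is to start from the exact expression \eqref{Eq:LargEgyK}--\eqref{EqF} and bound the random quantity $F(x_{K-1},u_{K-1},x_K^{(C)})$ from below by a deterministic quantity that does not depend on $x_K^{(C)}$, so that the sum over $x_K^{(C)},x_K^{(H)}$ separates into the product claimed in \eqref{eq:lb_cost_large2}. First I would observe that in \eqref{EqF} the term $(1-x_K^{(C)})(1-x_{K-1}^{(C)})(x_{K-1}^{(Q)}+u_{K-1}^{(\rm{of})})$ depends on the future state only through the factor $(1-x_K^{(C)})\in\{0,1\}$, and the base of the cube, $(x_{K-1}^{(L)}-u_{K-1}^{(\rm{lo})}-u_{K-1}^{(\rm{of})})+(1-x_K^{(C)})(1-x_{K-1}^{(C)})(x_{K-1}^{(Q)}+u_{K-1}^{(\rm{of})})$, is nonnegative and monotone nondecreasing in $(1-x_K^{(C)})$. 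Hence $F$ is minimized over $x_K^{(C)}\in\{0,1\}$ at $x_K^{(C)}=1$, giving $F\ge (x_{K-1}^{(L)}-u_{K-1}^{(\rm{lo})}-u_{K-1}^{(\rm{of})})^3$; but this bound is too crude to recover $\widetilde F$. Instead the right approach is to keep the dependence and apply Jensen's inequality (convexity of $t\mapsto t^3$ on $[0,\infty)$) after conditioning: since the map $x_K^{(C)}\mapsto$ (base) is affine in $(1-x_K^{(C)})$, taking the conditional expectation of the base over $x_K^{(C)}$ and using $\E[(1-X_K^{(C)})\mid x_{K-1}^{(C)}]=\Pr(x_K^{(C)}=0\mid x_{K-1}^{(C)})$ yields exactly the quantity $V(x_{K-1}^{(C)})=(1-x_{K-1}^{(C)})P_{00}$ in \eqref{Eq:V} (note $\Pr(x_K^{(C)}=0\mid x_{K-1}^{(C)}=0)=P_{00}$ and $\Pr(x_K^{(C)}=0\mid x_{K-1}^{(C)}=1)=P_{10}$, and the prefactor $(1-x_{K-1}^{(C)})$ kills the latter case).

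The key step is therefore to write \eqref{Eq:LargEgyK} as a conditional expectation over the independent pair $(X_K^{(C)},X_K^{(H)})$ and to decouple the two by a careful correlation argument. Concretely, I would note that the summand factorizes into a part depending on $x_K^{(C)}$ (namely $F$ and the $x_K^{(C)}$-dependent piece of $(1+\sqrt{\alpha/\lambda}\,x_K^{(C)}\sqrt{x_K^{(H)}})^{-2}$) and a part depending on $x_K^{(H)}$. The honest move is: $F$ is a convex function of the scalar $(1-x_K^{(C)})$, so by Jensen's inequality applied to the conditional law of $X_K^{(C)}$ alone,
\begin{align}
\sum_{x_K^{(C)}}\Pr(x_K^{(C)}\mid x_{K-1}^{(C)})\,F(x_{K-1},u_{K-1},x_K^{(C)})\,g(x_K^{(C)})
\;\ge\; \widetilde F(x_{K-1},u_{K-1})\cdot\!\!\sum_{x_K^{(C)}}\Pr(x_K^{(C)}\mid x_{K-1}^{(C)})\,g(x_K^{(C)}),\nonumber
\end{align}
where $g(x_K^{(C)})=(1+\sqrt{\alpha/\lambda}\,x_K^{(C)}\sqrt{x_K^{(H)}})^{-2}\le 1$ and $\widetilde F$ is the cube of the $x_K^{(C)}$-averaged base; the inequality direction requires checking that replacing $F$ by its average does not clash with the weight $g$ — this is where FKG-type monotonicity (or a direct two-term case check, since $x_K^{(C)}\in\{0,1\}$) is invoked: $F$ is nonincreasing and $g$ is nondecreasing in $x_K^{(C)}$, so their covariance under the conditional law is $\le 0$, i.e. $\E[Fg]\le \E[F]\E[g]$, which is the wrong direction, so in fact the clean route is to bound $g(x_K^{(C)})\ge g(0)\cdot$ nothing — rather one keeps $g$ exact and uses that $F$ and $g$ are comonotone after the substitution $x_K^{(C)}\to 1-x_K^{(C)}$... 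I would resolve this by expanding the two-term sum explicitly (there are only the cases $x_K^{(C)}\in\{0,1\}$), substituting $g(0)=1$ and $g(1)=(1+\sqrt{\alpha/\lambda}\sqrt{x_K^{(H)}})^{-2}$, and verifying the elementary inequality
$P_{c_{K-1}0}\,a^3 + P_{c_{K-1}1}\,b^3\,g(1)\;\ge\;(P_{c_{K-1}0}\,a+P_{c_{K-1}1}\,b)^3\,\big(P_{c_{K-1}0}+P_{c_{K-1}1}g(1)\big)$ fails in general, so the stated bound must actually be obtained by dropping the $g(1)$ weight down to match — i.e. by the cruder bound $g(x_K^{(C)})\ge g(1)$ is false too. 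I expect the intended argument is simply Jensen on the cube in the base, with the $x_K^{(H)}$-sum left intact as $S(\cdot)$ in \eqref{eq:Large_xi_K-1}, accepting a modest looseness; I would present it that way and flag the comonotonicity check as the delicate point.

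The remaining steps are routine: after the Jensen bound on the $x_K^{(C)}$-average of $F$ produces the factor $\widetilde F(x_{K-1},u_{K-1})$ as in \eqref{Eq:TiF}, the residual double sum over $x_K^{(C)},x_K^{(H)}$ of $\Pr(x_K^{(C)}\mid x_{K-1}^{(C)})\Pr(x_K^{(H)}\mid x_{K-1}^{(H)})(1+\sqrt{\alpha/\lambda}\,x_K^{(C)}\sqrt{x_K^{(H)}})^{-2}$ is literally the definition of $S(x_{K-1}^{(C)},x_{K-1}^{(H)})$ in \eqref{eq:Large_xi_K-1}, and pulling the constant $\alpha$ out gives \eqref{eq:lb_cost_large2}. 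Finally, the equality clause: if the user has non-causal information, it knows $x_K^{(C)}$ in advance, so the "average base" in $\widetilde F$ is replaced by the realized base, the Jensen step becomes an identity pointwise in $x_K^{(C)}$, and \eqref{Eq:LargEgyK} is reproduced exactly — I would state this in one line. The main obstacle, as indicated above, is pinning down precisely which convexity/monotonicity inequality makes the $x_K^{(C)}$-sum collapse to $\widetilde F\cdot S$ without the channel-dependent weight $g$ interfering; I would handle it by the finite (two-value) case analysis on $x_K^{(C)}$ rather than an abstract FKG appeal, since $\mathcal C=\{0,1\}$ makes everything explicit.
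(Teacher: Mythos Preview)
Your plan has the right ingredients---case split on $x_{K-1}^{(C)}$, a correlation (FKG/Chebyshev-sum) step to decouple $F$ from the channel weight, then Jensen on the cube---and this is exactly the route the paper takes. The only genuine gap is a sign error that sends you in circles: you assert that $g(x_K^{(C)})=\big(1+\sqrt{\alpha/\lambda}\,x_K^{(C)}\sqrt{x_K^{(H)}}\big)^{-2}$ is nondecreasing in $x_K^{(C)}$, but in fact $g(0)=1$ and $g(1)=\big(1+\sqrt{\alpha/\lambda}\,\sqrt{x_K^{(H)}}\big)^{-2}<1$, so $g$ is \emph{nonincreasing} in $x_K^{(C)}$. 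Since (for $x_{K-1}^{(C)}=0$) $F$ is also nonincreasing in $x_K^{(C)}$, the two factors are comonotone and the covariance inequality goes the \emph{right} way:
\[
\E\!\big[F\cdot g\,\big|\,x_{K-1},u_{K-1}\big]\;\ge\;\E\!\big[F\,\big|\,x_{K-1},u_{K-1}\big]\cdot\E\!\big[g\,\big|\,x_{K-1}\big].
\]
The second factor is precisely $S(x_{K-1}^{(C)},x_{K-1}^{(H)})$, and Jensen on the first factor (cube is convex on $[0,\infty)$) collapses $\E[F]$ to $\widetilde F$ via $\E[(1-X_K^{(C)})\mid x_{K-1}^{(C)}=0]=P_{00}$. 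For $x_{K-1}^{(C)}=1$ the factor $(1-x_{K-1}^{(C)})$ kills the buffer term, $F$ is deterministic in $x_K^{(C)}$, and \eqref{eq:lb_cost_large2} holds with equality. So there is no need for the two-term brute-force check you fall back on; once the monotonicity of $g$ is corrected, your first instinct (positive correlation $+$ Jensen) is exactly the paper's proof.
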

\begin{proof}
See Appendix~\ref{App:Lb_cost}.
\end{proof}

It can be observed that the function of $\widetilde{F}(x_{K-1},u_{K-1})$ in \eqref{Eq:TiF} has a similar form with $F(x_{K-1}, u_{K-1}, x_{K}^{(C)})$ in \eqref{EqF}. The key difference is that, $(1-x_{K-1}^{(C)})(1-x_K^{(C)})$ in \eqref{EqF} is replaced with $V(x_{K-1}^{(C)})$ in \eqref{Eq:TiF}, which helps derive the closed-form expression for the lower-bound of  $\E[J_K(X_K)|x_{K-1},u_{K-1}]$ as shown in \eqref{eq:lb_cost_large2}.  Intuitively, $V(x_{K-1}^{(C)})$ can be interpreted as the conditional probability that the buffered data in slot $(K-1)$ will be locally-computed  in slot $K$. In particular, if $x_{K-1}^{(C)}=1$, all the buffered data is computed by the helper within this slot and thus $V(x_{K-1}^{(C)})=0$. On the other hand, if $x_{K-1}^{(C)}=0$, $V(x_{K-1}^{(C)})=P_{00}$ showing that the probability of demand-computing for $(x_{K-1}^{(Q)}+u_{K-1}^{(\rm{of})})$-bit is the conditional busy probability in slot $K$.

Given the objective of energy minimization, though it is desirable to obtain a tight and tractable upper bound on the expected energy cost, finding such a bound is difficult. For this reason, we instead approximate the function using its lower bound in Lemma 4 as follows and verify the close-to-optimality by simulation.
\begin{equation}\label{Eq:LargApprox}
\E[J_K(X_K)|x_{K-1},u_{K-1}] \approx \alpha \widetilde{F}(x_{K-1},u_{K-1}) S(x_{K-1}^{(C)},x_{K-1}^{(H)}).
\end{equation}
The approximate form decomposes the energy-cost function into two product factors, depending on the state $x_{K-1}$ and action $u_{K-1}$. This simplifies the sub-optimal policy design in the sequel.

\subsubsection{Design  Sub-Optimal  Offloading-Control Policy}\label{subsubSec:subPlcy_large}
Using \eqref{Eq:LargApprox}, the cost-to-go function $J_{K-1}(x_{K-1})$ can be approximated as
\begin{equation}\label{eq:Large_JK-1App}
  J_{K-1}(x_{K-1})\approx \min_{u_{K-1}\in\mathcal{U}(x_{K-1})}\! \l\{ R(x_{K-1},u_{K-1})+\alpha \widetilde{F}(x_{K-1},u_{K-1}) S(x_{K-1}^{(C)},x_{K-1}^{(H)}) \r\}.
\end{equation}
Denote the optimization problem in \eqref{eq:Large_JK-1App} as Problem P5. It can be easily proved that Problem P5 is convex optimization problem. Applying the Lagrange method yields the optimal solution for Problem P5, which is also the sub-optimal policy for slot $(K-1)$.
\begin{lemma}[Sub-Optimal Policy for slot $(K-1)$]\label{The:LargK-1}
\emph{By approximating the expected conditional energy-cost as in \eqref{Eq:LargApprox}, the sub-optimal offloading-control  policy for solving the optimization problem $J_{K-1}(x_{K-1})$ is:
\begin{align}
\!\!&u_{K-1}^{(\rm{lo})*}\!=\!\!\l(x_{K-1}^{(L)}\!+\!V(x_{K-1}^{(C)})x_{K-1}^{(Q)}\r)\!\!\!\l[1\!+\!\dfrac{1}{\sqrt{S(x_{K-1}^{(C)},x_{K-1}^{(H)})}} \!+\!\sqrt{\dfrac{\alpha}{\lambda}}\!\!\l(1\!-\!V(x_{K-1}^{(C)})\r)^{\frac{3}{2}}\!\!\sqrt{x_{K-1}^{(H)}}\r]^{-1},\label{eq:inf_JK-1_l*} \\
&u_{K-1}^{(\rm{of})*}=\sqrt{\dfrac{\alpha}{\lambda}x_{K-1}^{(H)}\l(1-V(x_{K-1}^{(C)})\r)} u_{K-1}^{(\rm{lo})*},\label{eq:inf_JK-1_off*}
\end{align}
where $S(x_{K-1}^{(C)},x_{K-1}^{(H)})$ and $V(x_{K-1}^{(C)})$ are defined in \eqref{eq:Large_xi_K-1} and \eqref{Eq:V}, respectively. The corresponding minimum expected user-energy consumption is approximated as
\begin{align}\label{eq:inf_J_K-1}
J_{K-1}(x_{K-1})\approx &\alpha \l(x_{K-1}^{(L)}+V(x_{K-1}^{(C)})x_{K-1}^{(Q)}\r)^3 \nn \\ &\l[1+\dfrac{1}{\sqrt{S(x_{K-1}^{(C)},x_{K-1}^{(H)})}}\!+\!\sqrt{\dfrac{\alpha}{\lambda}}\l(1-V(x_{K-1}^{(C)})\r)^{\frac{3}{2}}\sqrt{x_{K-1}^{(H)}}\r]^{-2}.
\end{align}
}
\end{lemma}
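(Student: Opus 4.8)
\emph{Proof proposal.} The plan is to follow the route used for Lemma~\ref{Lem:ConK} and Theorem~\ref{Theo:OptW/oBuf}: first certify that Problem~P5 is a convex program, then reduce it by a linear change of variables to the minimization of a sum of three cubics under one affine equality, then write the KKT conditions and read off the offloaded-to-local ratio \eqref{eq:inf_JK-1_off*}, and finally back-substitute to obtain $u_{K-1}^{(\rm{lo})*}$ in \eqref{eq:inf_JK-1_l*} and the optimal value $J_{K-1}(x_{K-1})$ in \eqref{eq:inf_J_K-1}.

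For convexity, abbreviate $h=x_{K-1}^{(H)}$, $V=V(x_{K-1}^{(C)})$, $S=S(x_{K-1}^{(C)},x_{K-1}^{(H)})$, and observe that the base of the cube in $\widetilde{F}$, namely $\ell:=\bigl(x_{K-1}^{(L)}-u_{K-1}^{(\rm{lo})}-u_{K-1}^{(\rm{of})}\bigr)+V\bigl(x_{K-1}^{(Q)}+u_{K-1}^{(\rm{of})}\bigr)$, is affine in $u_{K-1}$ and nonnegative on $\mathcal{U}(x_{K-1})$, since $V\in[0,1]$ and \eqref{Eq:DataCons} forces $x_{K-1}^{(L)}-u_{K-1}^{(\rm{lo})}-u_{K-1}^{(\rm{of})}\ge0$. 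Thus the objective of P5 equals $\alpha(u_{K-1}^{(\rm{lo})})^3+\tfrac{\lambda}{h}(u_{K-1}^{(\rm{of})})^3+\alpha S\,\ell^3$, a nonnegative combination of cubes of nonnegative affine functions of $u_{K-1}$, hence convex on $\mathcal{U}(x_{K-1})$ (as $t\mapsto t^3$ is convex and nondecreasing on $[0,\infty)$); since $\mathcal{U}(x_{K-1})$ is a polytope, P5 is a convex program, so a KKT point is globally optimal.

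Next I would make the change of variables $(a,b,c)=(u_{K-1}^{(\rm{lo})},u_{K-1}^{(\rm{of})},\ell)$, under which the objective becomes $\alpha a^3+\tfrac{\lambda}{h}b^3+\alpha S c^3$ and the only remaining freedom is the affine identity $a+(1-V)b+c=M$ with $M:=x_{K-1}^{(L)}+V x_{K-1}^{(Q)}$ (in the regime $Q\ge D$ the buffer constraints \eqref{Eq:BufferCons} are implied by \eqref{Eq:DataCons} and may be dropped). Attaching a multiplier $\mu$ to the equality, stationarity gives $3\alpha a^2=\mu$, $3\tfrac{\lambda}{h}b^2=(1-V)\mu$, and $3\alpha S c^2=\mu$; since $M\ge x_{K-1}^{(L)}\ge1$ we have $\mu>0$, so $a,c>0$ and, as $V=(1-x_{K-1}^{(C)})P_{00}<1$, also $b>0$. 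These relations yield $c=a/\sqrt{S}$ and $b/a=\sqrt{\alpha h(1-V)/\lambda}$, which is \eqref{eq:inf_JK-1_off*}; substituting $b$ and $c$ into $a+(1-V)b+c=M$ then gives $a\bigl(1+\tfrac{1}{\sqrt{S}}+(1-V)^{3/2}\sqrt{\alpha h/\lambda}\bigr)=M$, i.e.\ \eqref{eq:inf_JK-1_l*}.

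For the optimal value, evaluating the objective at this point and using $c=a/\sqrt{S}$, $b=a\sqrt{\alpha h(1-V)/\lambda}$ gives $\tfrac{\lambda}{h}b^3=\alpha a^3(1-V)^{3/2}\sqrt{\alpha h/\lambda}$ and $\alpha S c^3=\alpha a^3/\sqrt{S}$, so the three terms collapse to $\alpha a^3\bigl(1+\tfrac{1}{\sqrt{S}}+(1-V)^{3/2}\sqrt{\alpha h/\lambda}\bigr)$; inserting $a=M\bigl(1+\tfrac{1}{\sqrt{S}}+(1-V)^{3/2}\sqrt{\alpha h/\lambda}\bigr)^{-1}$ produces exactly \eqref{eq:inf_J_K-1} (the factor $S^{3/2}$ arising from $c^3$ cancels the one from $a^3$). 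The algebra here and in the previous step is routine. The one point that genuinely needs care is the interior-point assumption behind ``applying the Lagrange method'': one must verify that the computed stationary point is feasible for P5, i.e.\ that $u_{K-1}^{(\rm{lo})*}+u_{K-1}^{(\rm{of})*}\le x_{K-1}^{(L)}$ (the change of variables only relaxes this data constraint). This is automatic when the helper CPU is idle ($V=0$), where it reduces to $\bigl(1+\sqrt{\alpha h/\lambda}\bigr)\bigl(1+\tfrac{1}{\sqrt{S}}+\sqrt{\alpha h/\lambda}\bigr)^{-1}<1$, and holds under mild conditions when the CPU is busy because of the strictly positive $1/\sqrt{S}$ term; I expect this feasibility check to be the main obstacle to a fully rigorous statement, and in the corner case where it fails the stationary point should be projected onto the active data constraint.
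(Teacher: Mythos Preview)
Your proposal is correct and follows essentially the same route as the paper---solve the convex program P5 by the Lagrange method and read off the stationary point---but the organization differs. The paper's proof splits into the two CPU states: for $x_{K-1}^{(C)}=1$ (so $V=0$) it quotes the zero-buffer solution from the proof of Theorem~\ref{Theo:OptW/oBuf}, and for $x_{K-1}^{(C)}=0$ (so $V=P_{00}$) it substitutes $\widetilde{F}$ and applies the Lagrange method directly, then merges the two cases via $V(x_{K-1}^{(C)})$. Your unified treatment through the linear change of variables $(a,b,c)$ with the single affine identity $a+(1-V)b+c=M$ is cleaner and makes the three-term cubic structure explicit, but the underlying computation is the same. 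The feasibility caveat you flag---that the stationary point may violate $u_{K-1}^{(\rm{lo})*}+u_{K-1}^{(\rm{of})*}\le x_{K-1}^{(L)}$ when $V>0$ and $x_{K-1}^{(Q)}$ is large---is a genuine gap that the paper's proof also leaves unaddressed; it simply reports the Lagrange stationary point as the (sub-optimal) policy without verifying the data constraint.
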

\begin{proof}
See Appendix~\ref{App:LargK-1}.
\end{proof}
\begin{remark}\emph{Computation offloading can potentially save energy when the buffer is sufficient and the channel is favorable, but it may cause more energy consumption due to demand computing. Recall that if $x_{K-1}^{(C)}=0$, $V(x_{K-1}^{(C)})$ represents the conditional probability of the helper-CPU busy state in slot $K$. It can be obtained from Lemma~\ref{The:LargK-1} that the ratio between $u_{K-1}^{(\rm{of})*}$ and $u_{K-1}^{(\rm{lo})*}$ decreases with the increase of $V(x_{K-1}^{(C)})$. This indicates that it is energy efficient to allocate more data for local computing when the helper CPU is more likely to be busy in slot $K$. In particular, the user should not offload data if $P_{00}=1$. Moreover, it can be obtained from \eqref{eq:inf_J_K-1} that $J_{K-1}(x_{K-1})$ monotonically increases with $V(x_{K-1}^{(C)})$, since a larger probability of the helper-CPU busy state tends to stop input-data offloaded to the helper and thus increases  the expected user-energy consumption.
 }
\end{remark}

Note that using \eqref{Eq:V}, the approximated minimum expected user-energy consumption $J_{K-1}(x_{K-1})$ in \eqref{eq:inf_J_K-1} has the similar form as $J_{K}(x_{K})$ in \eqref{eq:inf_JK}.  This suggests that by performing backward induction, the sub-optimal offloading-control policy in each slot $k$ is expected to have  closed-form expressions similar to \eqref{eq:inf_JK-1_l*} and \eqref{eq:inf_JK-1_off*} by using the similar approximation techniques as for deriving \eqref{Eq:LargApprox}. To this end, following the similar procedure as for deriving Theorem~\ref{Theo:OptW/oBuf}, the sub-optimal offloading-control policy in each slot $k$ for the large-buffer case can be derived as follows.
\begin{theorem}[Sub-Optimal Policy for Multi-Slots]\label{the:large_k}
\emph{Given a large buffer ($Q\ge D$) at the helper, for $k=1,2,\dots,K-1$, the sub-optimal offloading-control policy is given as follows by approximating the expected conditional energy-cost:
\begin{align}\label{eq:inf_off_loc}
 &u_k^{(\rm{lo})*}\!=\!\l(x_k^{(L)}+V_k(x_k^{(C)})x_k^{(Q)}\r)\!\l[1+\dfrac{1}{\sqrt{\widetilde{S}_k(x_k^{(C)},x_k^{(H)})}}+\sqrt{\dfrac{\alpha}{\lambda}}\l(1-V_k(x_k^{(C)})\r)^{\frac{3}{2}}\sqrt{x_k^{(H)}}\!\r]^{-1},\\
&u_k^{(\rm{of})*}=\sqrt{\dfrac{\alpha}{\lambda}x_{k}^{(H)}\l(1-V_k(x_{k}^{(C)})\r)} u_{k}^{(\rm{lo})*},
\end{align}
where
$V_k(x_k^{(C)})\triangleq (1-x_k^{(C)})P_{00}^{K-k}$, and the function $\widetilde{S}_k(x_k^{(C)},x_k^{(H)})$ is defined as
\begin{equation}\label{eq:inf_sub_xi}
 \!\!\!\!\!\widetilde{S}_k(x_k^{(C)},x_k^{(H)})\!=\!\begin{cases} \begin{split}\!\! \sum\limits_{x_{k+1}^{(C)}\in\{0,1\}}&\sum\limits_{x_{k+1}^{(H)}\in\{g,b\}}\!\! \Pr(x_{k+1}^{(C)}|x_k^{(C)})\Pr(x_{k+1}^{(H)}|x_k^{(H)})\\
  &\!\!\!\!\!\!\!\!\!\!\!\!\!\!\!\!\!\l[1\!+\!\dfrac{1}{\sqrt{\widetilde{S}_{k+1}(x_{k+1}^{(C)},x_{k+1}^{(H)})}}+ \sqrt{\dfrac{\alpha}{\lambda}}\l(1-V_{k+1}(x_{k+1}^{(C)})\r)^{\frac{3}{2}} \!\!\sqrt{x_{k+1}^{(H)}}\r]^{-2}, k < K, \end{split}\\
  \infty , \qquad\qquad\qquad\qquad\qquad\qquad\qquad\qquad\qquad\qquad\qquad\qquad \quad \quad  k=K. \notag
  \end{cases}
\end{equation}}
\end{theorem}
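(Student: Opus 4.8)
The plan is to prove the theorem by \emph{backward induction} on the slot index $k$, running from $k=K$ downward, and in fact to establish the slightly stronger statement that includes both the closed-form policy \eqref{eq:inf_off_loc} and the approximate cost-to-go $J_k(x_k)\approx\alpha\big(x_k^{(L)}+V_k(x_k^{(C)})x_k^{(Q)}\big)^{3}\big[1+1/\sqrt{\widetilde S_k(x_k^{(C)},x_k^{(H)})}+\sqrt{\alpha/\lambda}\,(1-V_k(x_k^{(C)}))^{3/2}\sqrt{x_k^{(H)}}\big]^{-2}$. The base case $k=K$ follows from Lemma~\ref{Lem:LargK} once one observes that $V_K(x_K^{(C)})=(1-x_K^{(C)})P_{00}^{0}=1-x_K^{(C)}$, hence $(1-V_K(x_K^{(C)}))^{3/2}=(x_K^{(C)})^{3/2}=x_K^{(C)}$ on $\mathcal{C}=\{0,1\}$, while $\widetilde S_K\equiv\infty$ deletes the $1/\sqrt{\widetilde S_K}$ term; the resulting expression is exactly \eqref{eq:inf_JK}. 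The case $k=K-1$ is then literally Lemma~\ref{The:LargK-1}: the definition $V_{K-1}(x_{K-1}^{(C)})=(1-x_{K-1}^{(C)})P_{00}^{1}$ coincides with $V(x_{K-1}^{(C)})$ in \eqref{Eq:V}, and by the identity just noted the sum in \eqref{eq:inf_sub_xi} evaluated at $k=K-1$ equals $S(x_{K-1}^{(C)},x_{K-1}^{(H)})$ of \eqref{eq:Large_xi_K-1}.

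For the inductive step, suppose the representation holds at slot $k+1$. Plugging the transitions $x_{k+1}^{(L)}=x_k^{(L)}-u_k^{(\rm{lo})}-u_k^{(\rm{of})}$ and $x_{k+1}^{(Q)}=(x_k^{(Q)}+u_k^{(\rm{of})})(1-x_k^{(C)})$ into $\E[J_{k+1}(X_{k+1})\mid x_k,u_k]$ gives a finite sum over $(x_{k+1}^{(C)},x_{k+1}^{(H)})$ weighted by $\Pr(x_{k+1}^{(C)}\mid x_k^{(C)})\Pr(x_{k+1}^{(H)}\mid x_k^{(H)})$. The only quantity coupling the cube $\big(x_{k+1}^{(L)}+V_{k+1}(x_{k+1}^{(C)})x_{k+1}^{(Q)}\big)^{3}$ to the channel/CPU weights is the factor $(1-x_{k+1}^{(C)})$ hidden inside $V_{k+1}$; mimicking the argument behind Lemma~\ref{Lem:Lb_cost}, I would replace this $(1-x_{k+1}^{(C)})$ inside the cube by its conditional mean $\Pr(x_{k+1}^{(C)}=0\mid x_k^{(C)})$. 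Combined with the prefactor $(1-x_k^{(C)})$ carried by $x_{k+1}^{(Q)}$, this produces the coefficient $(1-x_k^{(C)})\,P_{00}\,P_{00}^{K-k-1}=(1-x_k^{(C)})P_{00}^{K-k}=V_k(x_k^{(C)})$ (the prefactor makes the term vanish when $x_k^{(C)}=1$, consistent with $x_{k+1}^{(Q)}=0$). The cube then equals $\widetilde F(x_k,u_k)=\big[(x_k^{(L)}-u_k^{(\rm{lo})}-u_k^{(\rm{of})})+V_k(x_k^{(C)})(x_k^{(Q)}+u_k^{(\rm{of})})\big]^{3}$, which is deterministic given $(x_k,u_k)$ and hence factors out, leaving $\E[J_{k+1}(X_{k+1})\mid x_k,u_k]\approx\alpha\,\widetilde F(x_k,u_k)\,\widetilde S_k(x_k^{(C)},x_k^{(H)})$ with $\widetilde S_k$ exactly the sum defined in \eqref{eq:inf_sub_xi}.

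It then remains to solve $J_k(x_k)\approx\min_{u_k\in\mathcal{U}(x_k)}\{R(x_k,u_k)+\alpha\widetilde F(x_k,u_k)\widetilde S_k(x_k^{(C)},x_k^{(H)})\}$, which is convex and structurally identical to Problem P5, so it can be treated by the Lagrange method of Appendix~\ref{App:LargK-1}. A convenient route is the change of variables $A=x_k^{(L)}+V_k(x_k^{(C)})x_k^{(Q)}$ and $v=u_k^{(\rm{lo})}+(1-V_k(x_k^{(C)}))u_k^{(\rm{of})}$, so that $\widetilde F=(A-v)^{3}$: optimizing the split $(u_k^{(\rm{lo})},u_k^{(\rm{of})})$ for fixed $v$ yields $u_k^{(\rm{of})*}=\sqrt{(\alpha/\lambda)x_k^{(H)}(1-V_k(x_k^{(C)}))}\,u_k^{(\rm{lo})*}$ and collapses the objective to $\alpha v^{3}/(1+B)^{2}+\alpha\widetilde S_k(A-v)^{3}$ with $B=\sqrt{\alpha/\lambda}\,(1-V_k(x_k^{(C)}))^{3/2}\sqrt{x_k^{(H)}}$, and the remaining one-dimensional minimization over $v\in[0,A]$ gives $u_k^{(\rm{lo})*}=A\,[1+1/\sqrt{\widetilde S_k}+B]^{-1}$ and $J_k(x_k)\approx\alpha A^{3}[1+1/\sqrt{\widetilde S_k}+B]^{-2}$, i.e. exactly the asserted form with index $k$, closing the induction. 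Finally one verifies $u_k^{*}\in\mathcal{U}(x_k)$; since $Q\ge D$ the buffer constraints \eqref{Eq:BufferCons} are slack, and the boundary cases of the data constraint are handled as in Appendices~\ref{app:opt_data_scheduling} and \ref{App:LargK-1}.

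I expect the main obstacle to be the approximation step rather than the optimization: one must justify that replacing the random demand-computing indicator $(1-x_{k+1}^{(C)})$ by the compounded conditional busy probability (which accumulates to $P_{00}^{K-k}$ when $x_k^{(C)}=0$) is the correct generalization of the slot-$(K-1)$ bound of Lemma~\ref{Lem:Lb_cost}, and --- crucially for the recursion to go through --- that after this substitution the cost-to-go remains in \emph{exactly} the same parametric family, which is what makes the recursion \eqref{eq:inf_sub_xi} for $\widetilde S_k$ and the definition $V_k(x_k^{(C)})=(1-x_k^{(C)})P_{00}^{K-k}$ self-consistent; everything downstream is then a routine convex-optimization computation mirroring Appendices~\ref{app:opt_data_scheduling} and \ref{App:LargK-1}.
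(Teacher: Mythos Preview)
Your proposal is correct and follows essentially the same approach the paper takes: the paper does not give a separate proof of Theorem~\ref{the:large_k} but merely states that ``following the similar procedure as for deriving Theorem~\ref{Theo:OptW/oBuf}'' and ``using the similar approximation techniques as for deriving~\eqref{Eq:LargApprox}'' one obtains the result by backward induction. Your write-up in fact supplies more detail than the paper does---the explicit verification of the base case via $V_K=1-x_K^{(C)}$, the mechanism by which the compounded factor $P_{00}^{K-k}$ emerges from iterating the Lemma~\ref{Lem:Lb_cost} substitution, and the clean change of variables $(A,v)$ for the per-slot optimization---all of which are implicit in the paper's one-line justification.
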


Note that for slot $(K-1)$, $\widetilde{S}_{K-1}(x_{K-1}^{(C)},x_{K-1}^{(H)})$ and $V_{K-1}(x_{K-1}^{(C)})$ reduce to  $S(x_{K-1}^{(C)},x_{K-1}^{(H)})$ and $V(x_{K-1}^{(C)})$ in  \eqref{eq:Large_xi_K-1} and \eqref{Eq:V}, respectively. Moreover, one can observe that the above sub-optimal  offloading-control policy has the similar structure with the optimal policy in Theorem~\ref{Theo:OptW/oBuf}. The key difference resides in that for the large-buffer case, the offloading-control policy should account for the potential energy consumption arising from demanding computing if offloading data given helper CPU's busy state.
\begin{remark}[Approximation Accuracy and Dimensionality Reduction]
\emph{Similar to Lemma~\ref{Lem:Lb_cost}, the approximation is accurate if the user has the non-causal information of $x_k^{(C)}$ and $x_k^{(H)}$ for all $k$. In addition, note that if $x_k^{(C)}=0$, we have $V_k\l(x_k^{(C)}\r)=P_{00}^{K-k}$, which can be interpreted as the \emph{conditional helper-CPU all-busy probability}. This reflects the key fact that when the helper CPU is busy in slot $k$, the offloaded and buffered bits at the helper will consume energy from demand-computing by the local CPU, only if the helper CPU is constantly busy until the last slot. Furthermore, the proposed sub-optimal policy can achieve close-to-optimality as shown by simulation in the sequel, but it considerately reduce the computation complexity from $\mathcal{O}(16D^4\times D^2 \times K)$ to $\mathcal{O}(4K)$. }
\end{remark}
\begin{remark}[Effects of Buffer]\emph{The large helper buffer affects both the offloading and local-computing policies as follows. First, for offloading, in the case of zero buffer, the user does not offload data when the helper CPU is busy as indicated in \eqref{eq:opt_plcy}. Instead, due to the large buffer at the helper, the user can offload partial data for potentially utilizing future helper CPU resources and thus reducing transmission-energy consumption in favorable channels, which can be inferred from \eqref{eq:inf_off_loc}. Next, for local computing, with a large buffer, the user computes less data with the local CPU  than in the case of zero buffer, due to a larger offloaded data size in the busy states. These properties are confirmed by simulation results in the sequel.
}
\end{remark}

\section{Computation Offloading Control: Small Helper Buffer}\label{sec:small_bf}
In this section, the solution approach for the energy-efficient stochastic computation offloading developed in the preceding sections is extended to the case of a small buffer at the helper (i.e., $Q<D$). The closed-form expression for the optimal policy is intractable due to the buffer constraints in each slot [see \eqref{Eq:BufferCons}]. To address this issue, by exploiting the insight from policies for the earlier cases of zero and large buffers, we design a sub-optimal offloading-control policy that sheds light on the optimal policy structure and can achieve close-to-optimality as confirmed by simulation.

The derivation for the optimal policy for multi-slots encounters the following two key challenges. The first one is the difficulty for deriving a closed-form expression for the expected future energy cost, which is elaborated in Section~\ref{Sec:LargeOpt}. The second one lies in the buffer constraint (i.e., $x_k^{(Q)}+u_k^{(\rm{of})}\le Q$, if $x_k^{(C)}=0$), which bounds the amount of offloaded data according to the instantaneous buffered data size. To tackle these challenges, we propose a tractable \emph{buffer-aware candidate-policy selection} (BACS) scheme, which is defined as follows.
\begin{definition}[Buffer-Aware Candidate-Policy Selection]
\emph{The BACS scheme chooses one of the two candidate policies as the sub-optimal computation offloading policy, namely, \emph{truncated large-buffer based policy} (TLBP) and \emph{zero-buffer based policy} (ZBP) defined in the sequel, depending on the helper buffer size $Q$. To be specific, the BACS scheme selects the TLBP or ZBP policy, if the buffer size $Q$ is above or below a \emph{switching threshold} $Q_{\rm{th}}$ that is specified in the sequel.
}
\end{definition}

The key idea of the proposed BACS is to exploit the useful structures in the developed policies for the earlier cases of zero buffer and large buffer, and use them to design the sub-optimal policy for the case of small helper buffer. Intuitively, if the buffer size is relatively large (i.e., $Q\to D^{-}$), modifying the sub-optimal policy for the large buffer (see Section~\ref{Sec:LargeBuffer}) is expected to achieve desirable energy-savings performance. This modified policy, however, is unsuitable for the case of a relatively small buffer (i.e., $Q\to 0^{+}$), as it is based on the assumption of a large buffer. This suggests to choose another candidate policy that is modified from the optimal policy for the zero buffer. In the following, we elaborate both the TLBP and ZBP policies, and characterize the switching threshold.

\subsubsection{Truncated Large-Buffer based Policy}
This candidate policy is designed by applying the \emph{relaxation-and-truncation} (RT) approach for the solution approach of the large-buffer counterpart, accounting for the limited buffer size. To this end, in each slot, we first relax the buffer constraint and thus the problem reduces to the large-buffer counterpart. The corresponding sub-optimal policy can be derived in Theorem~\ref{the:large_k}, denoted by $\l\{\widehat{u}_k^{(\rm{lo})*}, \widehat{u}_k^{(\rm{of})*}\r\}$. Next, to account for the small buffer size, the control policy is further given by
\begin{equation}\label{eq:plcy_small}
u_k^{(\rm{lo})*}=\widehat{u}_k^{(\rm{lo})*},\quad\text{and}\quad
u_k^{(\rm{of})*}=\begin{cases} \widehat{u}_k^{(\rm{of})*}, & x_k^{(C)}=1, \\
\Big\langle \widehat{u}_k^{(\rm{of})*} \Big\rangle_{0}^{Q-x_k^{(Q)}}, \quad &x_k^{(C)}=0,
\end{cases}
\end{equation}
where $\langle\cdot\rangle_{z_1}^{z_2}$ denotes the truncation below $z_1$ and above $z_2$. Note that the offloaded data only needs to be truncated if the helper CPU is busy.
\subsubsection{Zero-Buffer based Policy}
Recall that the optimal policy for the case of zero buffer (see Section~\ref{Sec:ContWoBuf}) does not offload data when the helper CPU is busy. This key fact allows directly using the solution approach for the zero-buffer counterpart  to the current case of a small buffer.
\subsubsection{Switching Threshold} Let $G_1(Q)$ and $G_2(Q)$ denote the expected user-energy consumption with a  $K$-slot deadline  by using TLBP and ZBP, respectively, which can be numerically computed. First, the  existence of the switching threshold is established in the following lemma.
\begin{lemma}[Existence of the Switching Threshold]\label{lem:exist_Q_th}\emph{There exists a switching threshold  $Q_{\rm{th}}$ in the range of  $0\le Q_{\rm{th}}\le D$, which satisfies $G_1(Q_{\rm{th}})= G_2(Q_{\rm{th}})$.
}
\end{lemma}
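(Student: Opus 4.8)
The plan is to reduce the statement to an application of the intermediate value theorem to the continuous function $g(Q)\bydef G_1(Q)-G_2(Q)$ on $[0,D]$, once I show $g(0)\ge 0$ and $g(D)\le 0$. The first step is to observe that $G_2$ is in fact \emph{constant} in $Q$: under ZBP the user offloads only when the helper CPU is idle, and an idle CPU empties the buffer in the next slot [see \eqref{Eq:Buffer}], so along every sample path the buffer state remains $0$; hence ZBP is admissible for every buffer size $Q\ge 0$ and incurs the same expected user energy, $G_2(Q)\equiv G_2$. It therefore suffices to show that the single curve $G_1(\cdot)$ starts at or above the level $G_2$ at $Q=0$ and ends at or below it at $Q=D$.

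Next I would establish continuity of $G_1$ on $[0,D]$. Fix one of the finitely many helper-CPU/channel realizations $(c_1,h_1),\dots,(c_K,h_K)$. Along this path the TLBP action in slot $k$ is $u_k^{(\rm{lo})*}=\widehat{u}_k^{(\rm{lo})*}$ together with $u_k^{(\rm{of})*}=\widehat{u}_k^{(\rm{of})*}$ if $c_k=1$ and $u_k^{(\rm{of})*}=\langle \widehat{u}_k^{(\rm{of})*}\rangle_0^{Q-x_k^{(Q)}}$ if $c_k=0$; the truncation operator is continuous in $Q$, the Theorem~\ref{the:large_k} expressions are continuous in $x_k^{(Q)}$ (their bracketed factors are bounded below by $1$), and the state recursions \eqref{Eq:RemainData}--\eqref{Eq:Buffer} are affine in the actions. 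By induction on $k$ the whole action sequence, and hence the path energy $\sum_k R(x_k,u_k)$, is a continuous function of $Q$ (one also checks feasibility is preserved so that this energy is the true cost). Since $G_1(Q)$ is the finite $\Pr(\text{path})$-weighted sum of these path energies, $G_1$ is continuous.

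For the endpoint $Q=0$, with no buffer no offloaded bit can be retained, so the truncation forces $u_k^{(\rm{of})*}=0$ in every busy slot. Substituting $x_k^{(Q)}=0$ one checks that the data constraint \eqref{Eq:DataCons} and the demand-computing constraint \eqref{Eq:DemandCon} reduce to identities already satisfied by the Theorem~\ref{the:large_k} formulas, so TLBP at $Q=0$ is an \emph{admissible} policy for the zero-buffer MDP. Because ZBP implements the \emph{optimal} zero-buffer policy of Theorem~\ref{Theo:OptW/oBuf}, we obtain $G_1(0)\ge G_2$, i.e.\ $g(0)\ge 0$.

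For the endpoint $Q=D$ the buffer constraint \eqref{Eq:BufferCons} never binds, since the total buffered-plus-offloaded data cannot exceed $D$; hence TLBP coincides with the large-buffer sub-optimal policy of Theorem~\ref{the:large_k}, and the goal becomes $G_1(D)\le G_2$. This is the step where I expect the real difficulty. The intuition is that, unlike ZBP, the Theorem~\ref{the:large_k} policy offloads a strictly positive amount into the buffer during a busy slot whenever the channel is favourable and the conditional all-busy probability $V_k$ is small; that data is cleared at \emph{zero} user energy as soon as the CPU turns idle, so it replaces local-computing energy that ZBP must spend, and the $(1-V_k)^{3/2}$ weighting in \eqref{eq:inf_off_loc} is exactly what keeps this extra offloading non-detrimental in expectation. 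To make this rigorous I would either (i) run a slot-by-slot exchange/coupling argument showing that the expected cost-to-go of TLBP never exceeds the cost-to-go obtained by additionally imposing $u_k^{(\rm{of})}=0$ in all busy slots (which is precisely ZBP's restriction), or (ii) invoke monotonicity of the true optimal value $G^{*}(Q)$ in the buffer size, $G^{*}(D)\le G^{*}(0)=G_2$, together with the near-optimality of the Theorem~\ref{the:large_k} policy demonstrated in Section~\ref{sec: simulation results}. Granting $g(D)\le 0$, continuity of $g$ and $g(0)\ge 0$ give, by the intermediate value theorem, some $Q_{\rm{th}}\in[0,D]$ with $g(Q_{\rm{th}})=0$, i.e.\ $G_1(Q_{\rm{th}})=G_2(Q_{\rm{th}})$, which is the claim. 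The sole nontrivial point is the justification of $G_1(D)\le G_2$; the remaining ingredients are routine.
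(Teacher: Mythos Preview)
Your approach is essentially the same as the paper's: compare the two policies at the endpoints $Q=0$ (ZBP optimal, so $G_1(0)\ge G_2(0)$) and $Q=D$ (TLBP better, so $G_1(D)\le G_2(D)$), take continuity, and invoke the intermediate value theorem. The paper's own proof is in fact much terser than yours---it simply \emph{assumes} continuity and \emph{asserts} the two endpoint inequalities (with what appear to be reversed inequality signs, but the intended logic matches yours) without the careful justifications you attempt; in particular it does not prove the step $G_1(D)\le G_2$ that you correctly flag as the nontrivial one.
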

\begin{proof}
See Appendix~\ref{App:exist_Q_th}.
\end{proof}

In addition, though the expected user-energy consumption using TLBP, $G_1(Q)$, is intractable due to the RT procedure, it can be evaluated by  simulation that $G_1(Q)$ is \emph{non-increasing} with the buffer size $Q$. This is expected since a larger buffer size allows more data to be offloaded, if offloading consumes less energy than local computing.  The effects of the computing deadline on the switching threshold are evaluated by simulation in the sequel. The switching threshold can be efficiently computed by a proposed bisection-search procedure as detailed in Algorithm~\ref{alg:Q_th}.
\begin{algorithm}[t!]
\caption{Proposed Bisection-Search Algorithm for the Switching Threshold} \label{alg:Q_th} \centering
\begin{algorithmic}
\STATE \textbf{Step 1}: Calculate the expected user-energy consumption  over $10^4$ realizations for ZBP using the results in Theorem~\ref{Theo:OptW/oBuf}, denoted by $\bar{E}_Z$.
\STATE \textbf{Step 2} [Bisection-search for $Q_{\rm{th}}$]: Initialize $Q_{\ell}=0$, $Q_{h}=D/2$, and $\varepsilon >0$. \emph{Repeat}
\STATE (1) Set $Q_m=(Q_{\ell}+Q_{h})/2$, and calculate the expected user-energy consumption over $10^4$ realizations for TLBP, denoted as  $\bar{E}_T$.
\STATE (2) If $\bar{E}_T<\bar{E}_Z$, let $Q_{h} = Q_m$; otherwise, $Q_{\ell}=Q_m$.
\STATE \emph{Until} $|\bar{E}_T-\bar{E}_Z| \le \varepsilon$. Return $Q_{\rm{th}}=Q_m$.
\end{algorithmic}
\end{algorithm}

\section{Simulation Results and Discussions}\label{sec: simulation results}
In this section, simulation results are presented for evaluating the performance of the proposed stochastic computation offloading policies. The parameters are set as follows unless stated otherwise. The user is required to compute $3000$-bit given a $5$-slot deadline, each slot with a time  duration $t_0=100$ ms. For local computing, $\gamma=10^{-28}$ and the required CPU cycles for computing 1-bit data is $w=10^{5}$ cycle/bit\cite{you2016energy,you2017energy}. For offloading, the channel gain $H_k$ between the user and the helper follows a stationary Markov chain with the transition probabilities set as $P_{gg}=0.8$ and $P_{bb}=0.7$. The channel gains of the two channel states are set as $g=10^{-3}$ and $b=10^{-5}$, respectively. The energy coefficient is $\lambda=10^{-15}$. The helper-CPU state  follows another Markov chain with $P_{11}=0.8$ and $P_{00}=0.7$. For the case of small buffer, the helper buffer size is $Q=300$ bits. For performance comparison, the optimal policy is numerically computed by DP. Furthermore, a baseline algorithm is considered, called \emph{equal-allocation} policy, which first equally allocates the total data into $K$ slots and then optimizes the locally-computed and offloaded data sizes in each slot.

\subsection{Close-to-Optimality}
\begin{figure}[t!]
  \centering
  \includegraphics[width=7.3cm]{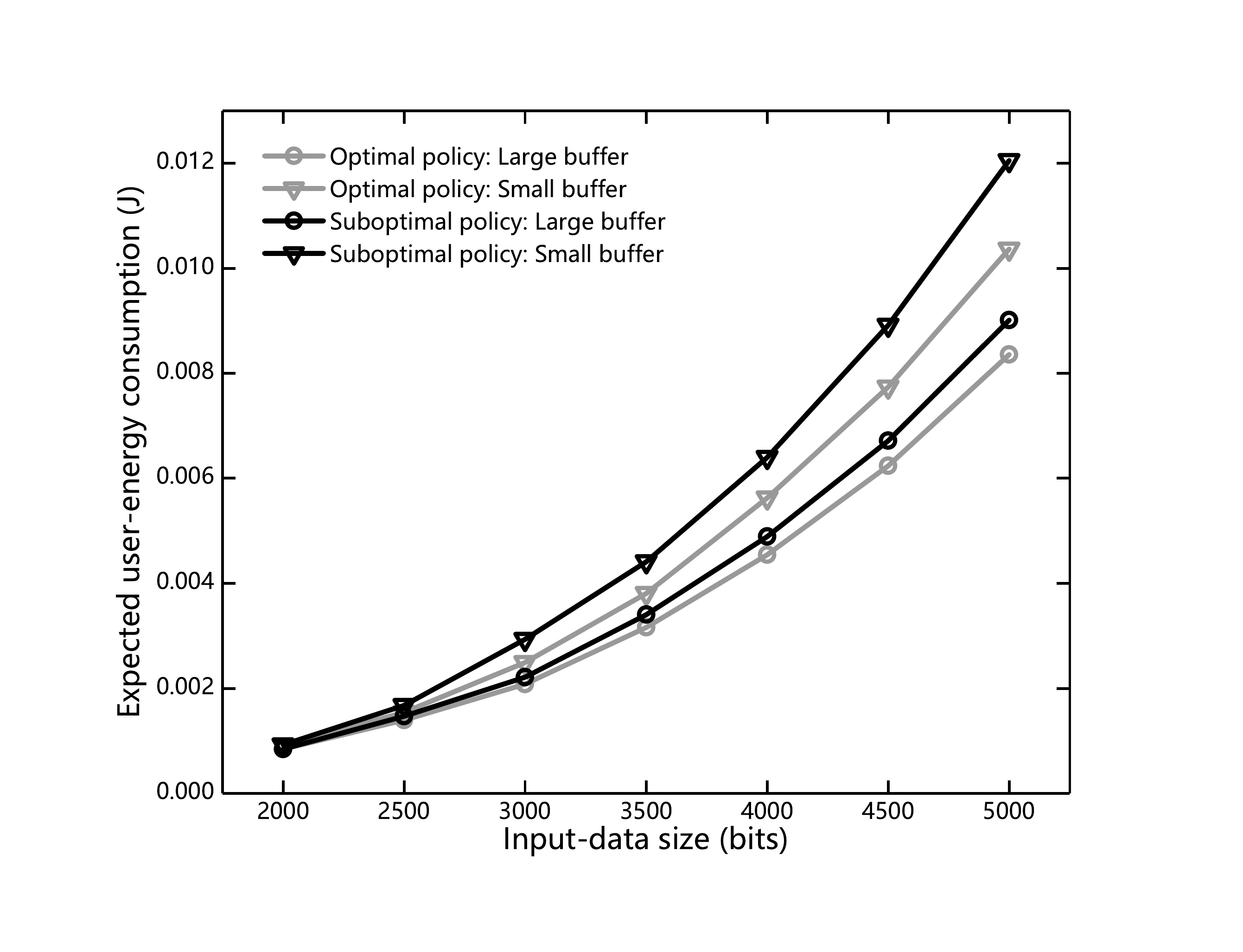}\\
  \caption{Close-to-optimality of proposed policies.}\label{Fig:Close_opt}
\end{figure}
Fig.~\ref{Fig:Close_opt} shows close-to-optimal performance of the proposed sub-optimal policies. It is observed that for the case of a large buffer, the expected user-energy consumption of the sub-optimal policy is close to that of the optimal one, especially when the data size is small. Moreover, even in the large-data regime e.g., $D=5000$ bits, the sub-optimal policy only results in additional $7\%$ of the user-energy consumption. Next, for the case of small buffer, the proposed sub-optimal policy also achieves close-to-optimal performance and has slightly larger user-energy consumption in the large-data regime.
\subsection{Effects of Parameters}
\begin{figure}[t!]
  \centering
  \subfigure[Effects of data size]{
  \begin{minipage}{7.5cm}
  \centering
  \includegraphics[width=7.3cm]{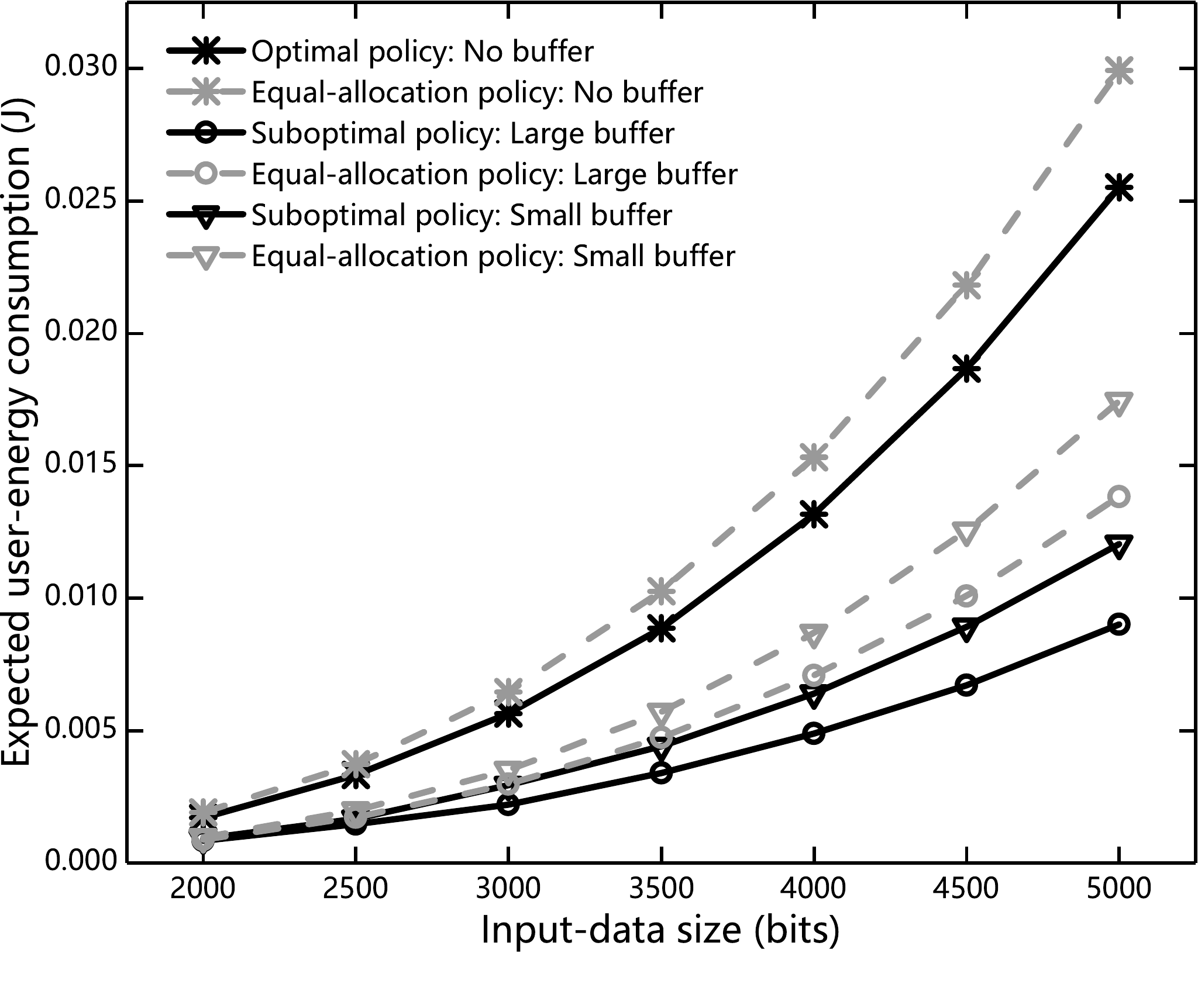}\label{Fig:Eff_datasize}
  \end{minipage}
  }
  \subfigure[Effects of computing deadline]{
  \begin{minipage}{7.5cm}
  \centering
  \includegraphics[width=7.3cm]{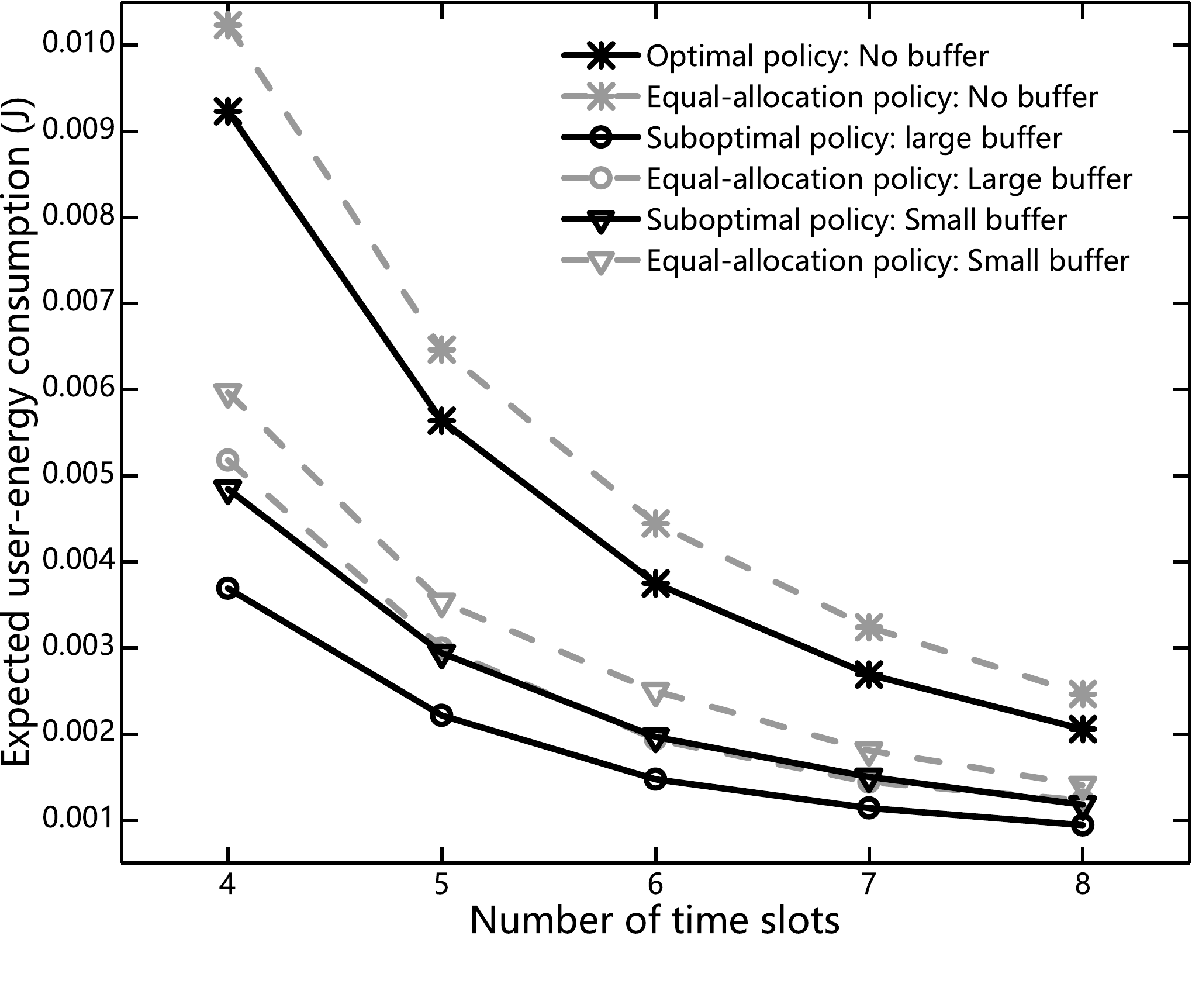}\label{Fig:Effects_deadline}
  \end{minipage}
  }
  \subfigure[Effects of the helper-CPU idle probability]{
  \begin{minipage}{7.5cm}
  \centering
  \includegraphics[width=7.3cm]{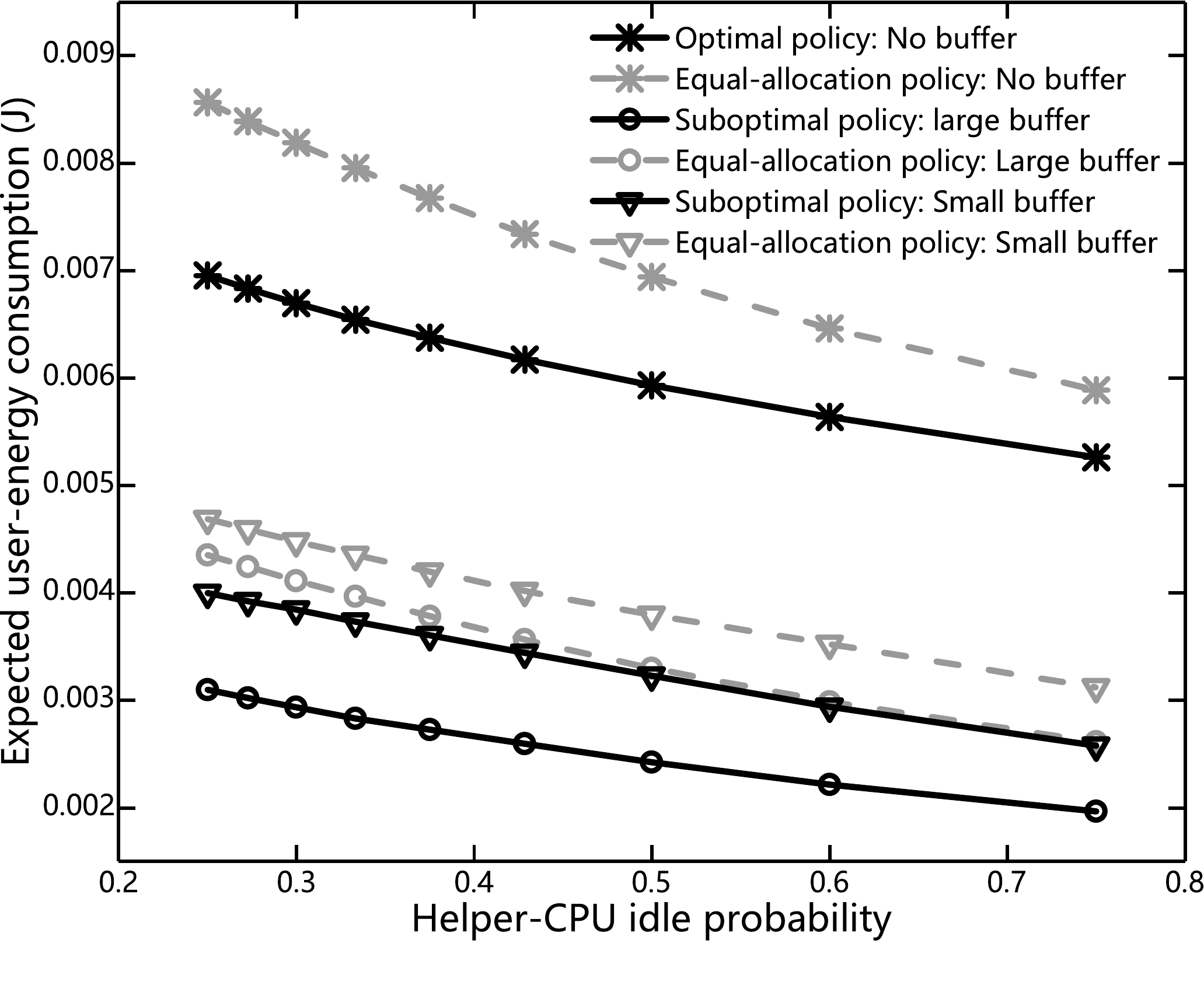}\label{Fig:Effects_idling}
  \end{minipage}
  }
  \subfigure[Effects of buffer size]{
  \begin{minipage}{7.5cm}
  \centering
  \includegraphics[width=7.3cm]{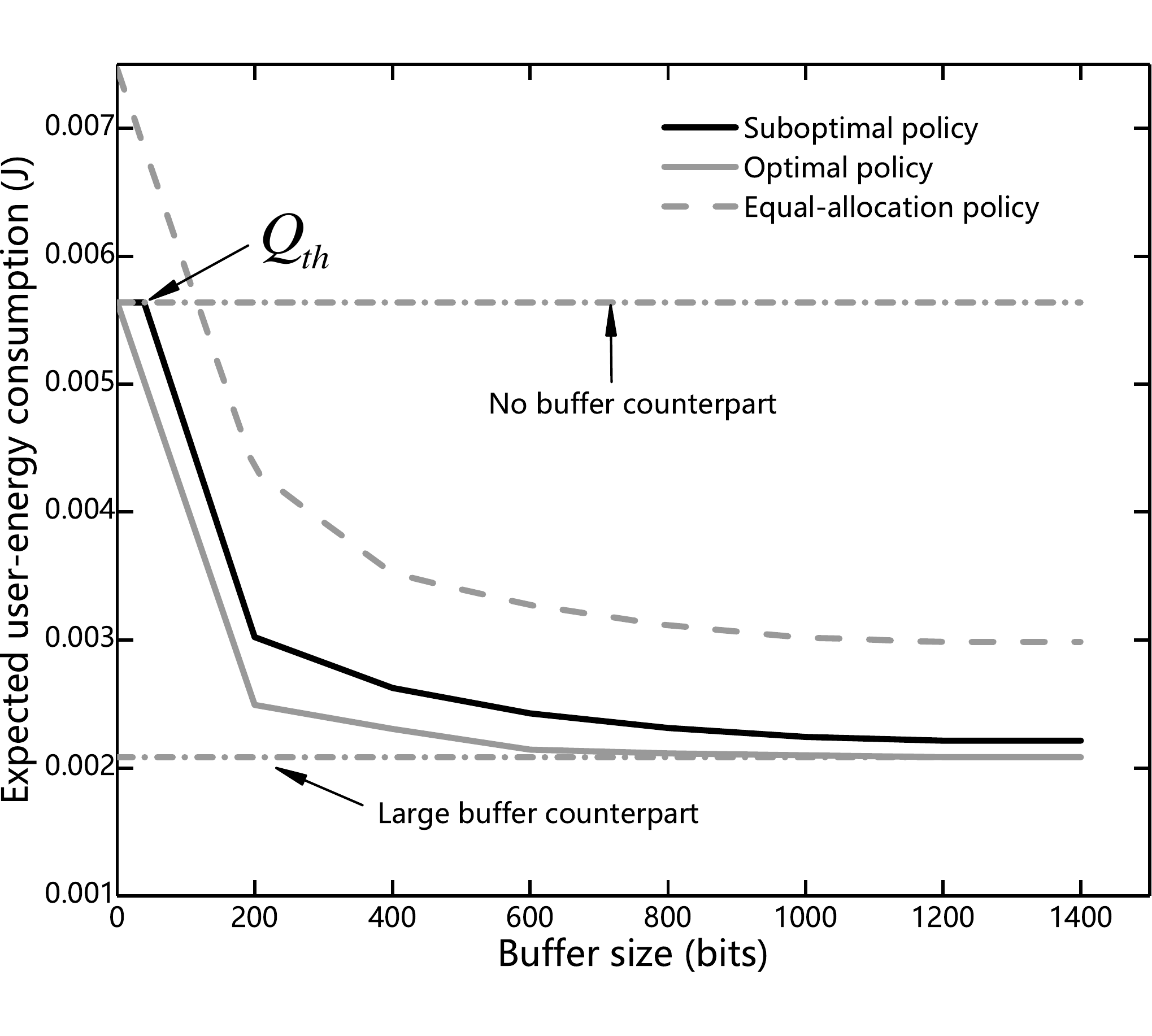}\label{Fig:Effects_buffersize}
  \end{minipage}
  }
  \caption{The effects of parameters on the expected user-energy consumption for the three cases: (a) input-data size with $K=5$; (b) computing deadline with $D=3000$ bits; (c) helper-CPU idle probability with $D=3000$ bits and $K=5$; (d) buffer size with $D=3000$ bits and $K=5$. }\label{Fig:Effects_para}
\end{figure}
The curves of the expected user-energy consumption versus the input-data size are plotted in Fig.~\ref{Fig:Eff_datasize}. Several observations are made as follows. First, as the input-data size increases, the expected user-energy consumption of each policy grows at an increasing rate. Next, for the case of zero buffer, the optimal policy has less user-energy consumption than the equal-allocation policy, especially in the large-data regime. Similarly, the proposed sub-optimal policies for the case of large/small buffer achieve significant performance gains, both of them achieving about $30\%$ reduction of energy consumption compared with the equal-allocation policies at $D\approx5000$ bits.  Moreover, for the proposed optimal/sub-optimal policies, the policy of the large-buffer case consumes less user-energy consumption than the case of small buffer; and the case of zero buffer consumes the largest user-energy consumption. This shows that increasing the buffer size can potentially reduce the user-energy consumption, as it allows more data to be offloaded to the helper for energy savings.

Fig.~\ref{Fig:Effects_deadline} depicts the curves of expected user-energy consumption versus the computing deadlines. Observe that the expected user-energy consumption of each policy is monotonically decreasing with the increasing of number of  slots, which is aligned with intuition. In particular, the decreasing rate is larger at a shorter deadline (i.e., smaller number of total slots), indicating that it is highly energy-efficient to extend the deadline for energy savings when the current deadline requirement is stringent. In addition, for the case of no buffer, the optimal policy has almost constant performance gain compared with the equal-allocation policy, while the gain deceases with the number of slots for the cases of large and small buffer. Moreover, the proposed policy for the case of large buffer has much smaller user-energy consumption than the counterparts with a small buffer and no buffer.

The effects of the helper-CPU idle probability on the expected user-energy consumption are evaluated in Fig.~\ref{Fig:Effects_idling}. It can be observed that as the helper-CPU idle probability increases, the user-energy consumptions of the proposed policies are almost \emph{linearly} decreasing. The reason is that, with a higher idle probability, the helper CPU is more likely at the idle state in the whole slots, allowing the user to offload more data to the helper for energy savings. Again, the performances of proposed sub-optimal policies significantly outperform the corresponding equal-allocation policies. Other observations are similar to those of Fig.~\ref{Fig:Effects_deadline}.

The curves of expected user-energy consumption versus the buffer size are shown in Fig.~\ref{Fig:Effects_buffersize}. It is observed that enlarging the buffer size can considerably help reduce the expected user-energy consumption of the proposed sub-optimal policy when the buffer size is small. However, after the buffer size exceeds a threshold (about $1000$ bits), the performance cannot be further improved and converges to that of the large-buffer counterpart. The reason is that the buffer size is no longer the bottleneck for offloading data, when the buffer size is sufficiently large. Next, the switching threshold $Q_{\rm{th}}$ is quite small (about $40$ bits), which means that for the helper with a moderate buffer size, the ZBP policy outperforms the TLBP policy. In addition, the proposed sub-optimal policy has close-to-optimal performance, especially in the large-buffer size regime.
\subsection{Buffer Gain}
\begin{figure}[t!]
  \centering
  \subfigure[Effects of buffer size]{
  \begin{minipage}{7.5cm}
  \centering
  \includegraphics[width=7.3cm]{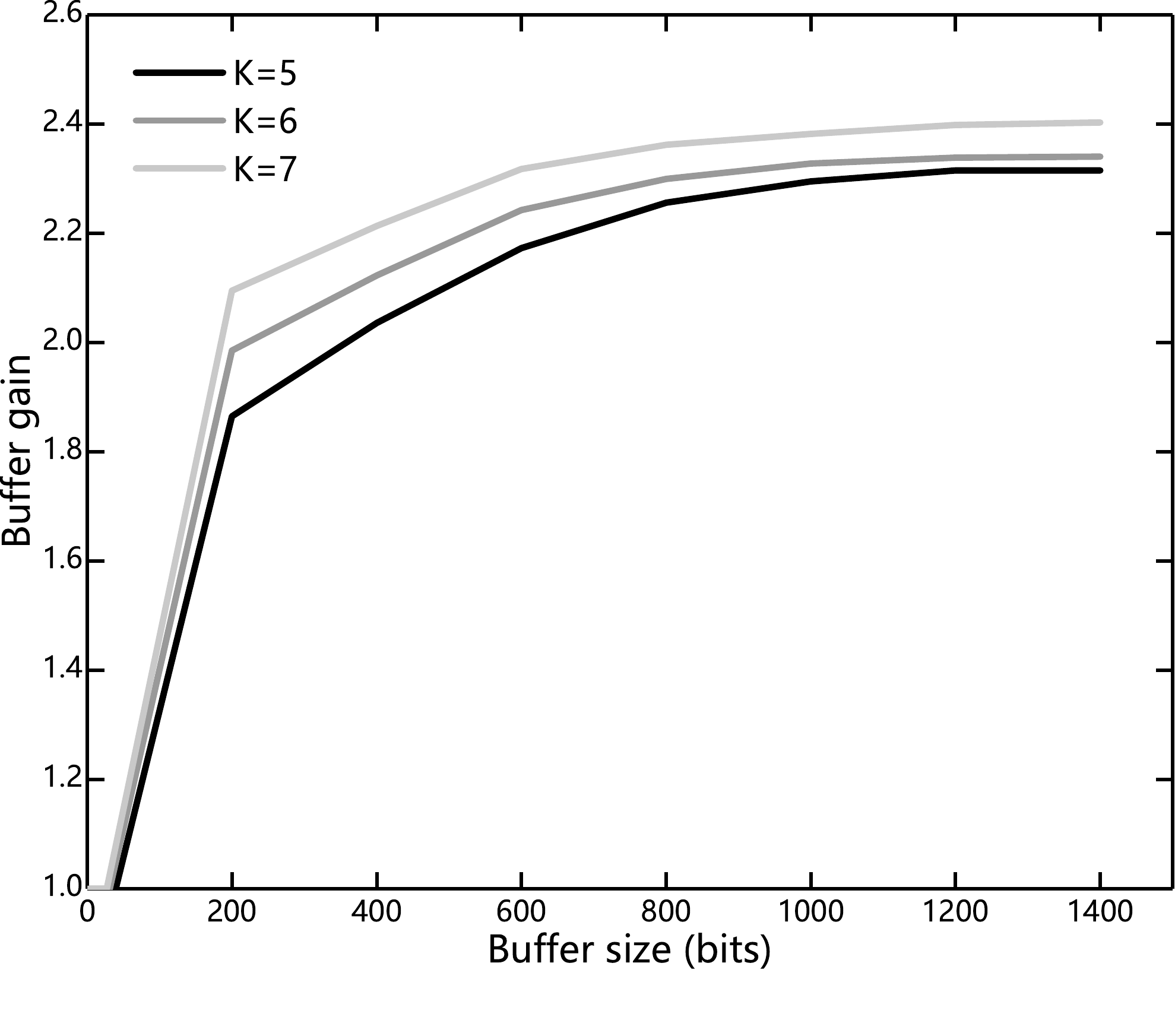}\label{Fig:bf_gain}
  \end{minipage}
  }
  \subfigure[Effects of the helper-CPU idle probability]{
  \begin{minipage}{7.5cm}
  \centering
  \includegraphics[width=7.3cm]{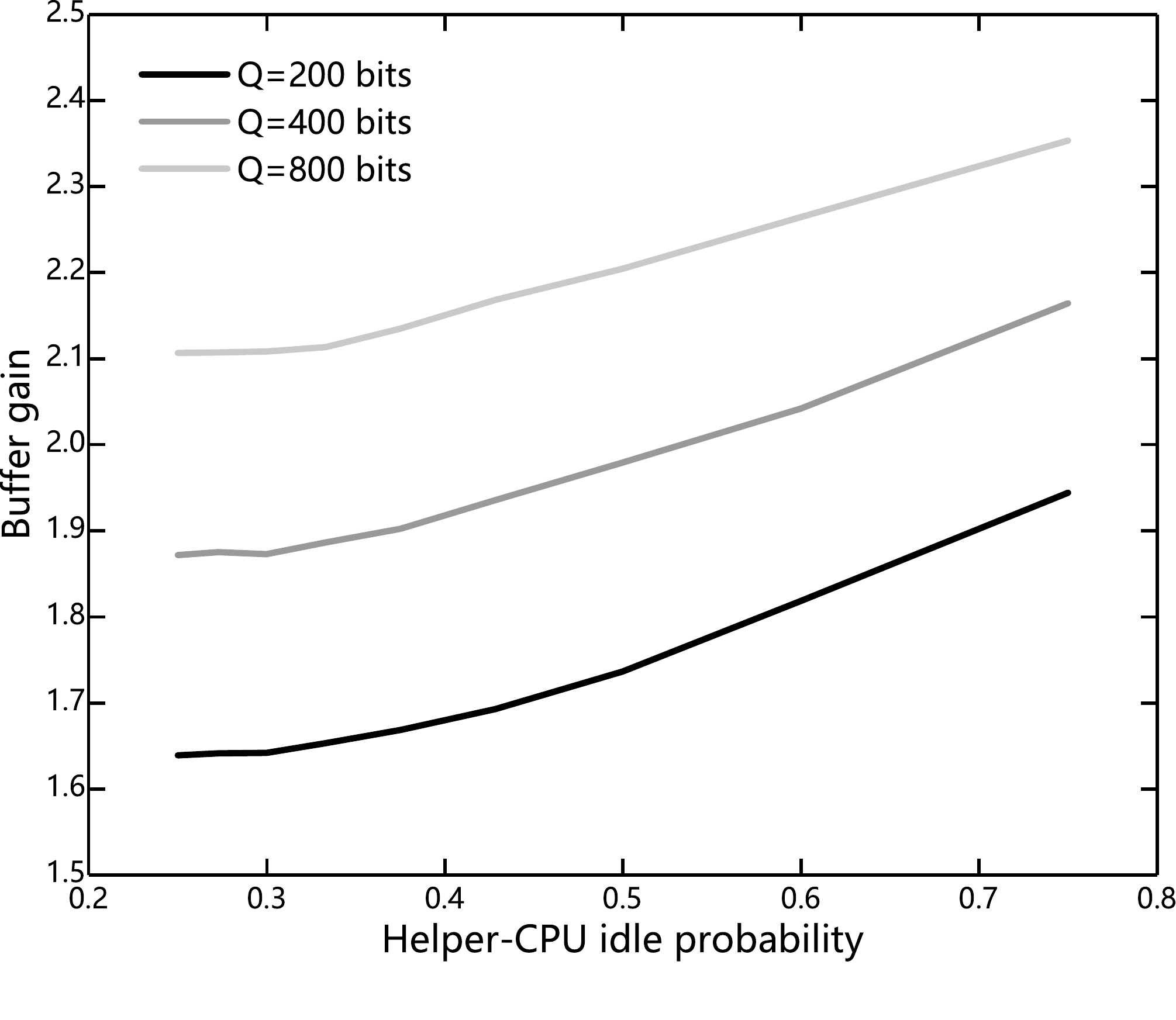}\label{Fig:buffer_vs_P11}
  \end{minipage}
  }
  \caption{The effects of parameters on the buffer gain: (a) buffer size with $D=3000$ bits; (b) helper-CPU idle probability with $D=3000$ bits and $K=5$.}\label{Fig:Effects_buffer}
\end{figure}
To quantify the amount of energy-consumption reduction obtained from the helper buffer, we define a performance metric, called \emph{buffer gain}, as the expected user-energy consumption ratio between the computation offloading policies without and with a buffer $Q$.

Fig.~\ref{Fig:bf_gain} shows the effects of buffer size on the buffer gain with different computing deadlines. It is observed that the buffer gains for different deadlines see a fast increase in the small buffer-size regime and then converge gradually to a constant corresponding to the case of a large buffer. \emph{It implies that the maximum buffer gain is finite and independent of the buffer size.} Moreover, the proposed policy with a longer computing deadline has a relatively larger buffer gain.

The curves of buffer gain versus the helper-CPU idle probability are depicted in Fig.~\ref{Fig:buffer_vs_P11}. Observe that given a fixed buffer size, as the helper-CPU idle probability increases, the buffer gain firstly slowly increases and then sees a fast and \emph{almost-linear} increase when the helper-CPU idle probability is larger than $0.3$. The reason is that, with a higher idle probability, the buffered data is more likely to be computed by the helper. On the other hand, given a fixed helper-CPU idle probability, the larger buffer size enjoys a larger buffer gain, which is expected.

\section{Concluding Remarks}\label{sec: Concluding Remarks}
This paper investigates the stochastic computation offloading to a helper with a dynamically loaded CPU. Assuming that the user only has statistical information of the helper CPU and channel states, one MDP optimization problem is formulated to minimize the expected user-energy consumption by controlling the offloaded and locally-computed data sizes in different slots. We first characterize the optimal policy structure for the case of zero buffer at the helper. Then for the cases of large and small helper buffer, we propose effective sub-optimal computation offloading policies by using approximation techniques and candidate-policy selection scheme, respectively, which are shown to have close-to-optimal performance in simulation. This work can be extended in several interesting directions. First, the current work focusing on one-shot task arrival can be generalized to the case of bursty data arrivals. Second, P2P computation offloading can be integrated with advanced computing techniques such as parallel computing to achieve higher energy savings gain. Last, it is interesting to extend the current helper-assisted computation offloading system to a more complex one, e.g., finite computation capacity at helpers and collaborative offloading among mobiles.

\appendix

\subsection{Proof of Lemma~\ref{Lem:ConK}} \label{app:conK}
First, if $x_K^{(C)}=0$, computation offloading is not allowed and thus the user-energy consumption is directly given by $J_K(x_K)=\alpha (x_K^{(L)})^3$. Next, if $x_K^{(C)}=1$, taking the first-order derivative of the objective function in  Problem P3 and setting it to zero yields the optimal local computed data size $u_K^{(\rm{lo})*}=\left( 1+\sqrt{\alpha x_K^{(H)} /\lambda}\right )^{-1}x_K^{(L)}$. The corresponding minimum energy consumption  $J_K(x_K)=\alpha (x_K^{(L)})^3 \left( 1+ \sqrt{\alpha x_K^{(H)} /\lambda}\right )^{-2}$. Last, combining the results for $x_K^{(C)}=0$ and $x_K^{(C)}=1$ together leads to the desired results.

\subsection{Proof of Theorem~\ref{Theo:OptW/oBuf}}\label{app:opt_data_scheduling}
We first derive the minimum expected user-energy from  slot $(K-1)$ to $K$ for both the cases of  $x_{K-1}^{(C)}=1$ and  $x_{K-1}^{(C)}=0$ as follows. If $x_{K-1}^{(C)}=0$, the input-data cannot be offloaded, i.e., $u_{K-1}^{(\rm{of})}=0$. Then the optimization problem reduces to
\begin{align}
\!\!\!\!J_{K-1}(x_{K-1})&= \min \limits_{0\le u_{K-1}^{(\rm{lo})}\le x_{K-1}^{(L)}} \left \{\alpha (u_{K-1}^{(\rm{lo})})^3 + \E \l[J_K(X_K)\Big|x_{K-1},u_{K-1}^{(\rm{lo})}\r] \right\} \nn. \\
    &\overset{(\text{i})}{=}\min \limits_{0\le u_{K-1}^{(\rm{lo})}\le x_{K-1}^{(L)}} \left\{\alpha (u_{K-1}^{(\rm{lo})})^3 + \alpha (x_{K-1}^{(L)}-u_{K-1}^{(\rm{lo})})^3 \times S_{K-1}(x_{K-1}^{(C)},x_{K-1}^{(H)})\right\}, \label{eq:J_K-1_b}
\end{align}
where (i) is according to Lemma~\ref{Lem:ConK} and $$S_{K-1}(x_{K-1}^{(C)},x_{K-1}^{(H)})=\sum\limits_{x_K^{(C)}\in\{0,1\}}\sum\limits_{x_K^{(H)}\in\{g,b\}}\Pr(x_K^{(C)}|x_{K-1}^{(C)}) \Pr(x_K^{(H)}|x_{K-1}^{(H)})\l(1+\sqrt{\dfrac{\alpha}{\lambda}}x_K^{(C)}\sqrt{x_K^{(H)}}\r)^{-2}.$$ Taking the first-order derivative of the objective function in \eqref{eq:J_K-1_b} and setting it to zero, we have
\begin{equation}\label{eq:2JK-1_b}
  u_{K-1}^{(\rm{lo})*} = \l(1+ \frac{1}{\sqrt{S_{K-1}(x_{K-1}^{(C)},x_{K-1}^{(H)})}}\r)^{-1}x_{K-1}^{(L)}.
\end{equation}
Substituting (\ref{eq:2JK-1_b}) into (\ref{eq:J_K-1_b}) yields:
\begin{equation}\label{eq:2J_K-1_b}
  J_{K-1}(x_{K-1}) = \alpha (x_{K-1}^{(L)})^3 \l(1+ \frac{1}{\sqrt{S_{K-1}(x_{K-1}^{(C)},x_{K-1}^{(H)})}}\r)^{-2}, ~~\text{if}~~ x_{K-1}^{(C)}=0.
\end{equation}
If $x_{K-1}^{(C)}=1$, both the locally-computed and offloaded data sizes should be optimized. The optimization problem reduces to
\begin{equation*}\label{eq:no_JK-1_idle}
 \!\!\!\!J_{K-1}(x_{K-1})= \!\!\!\!\!\!\min\limits_{u_{K-1}\in\mathcal{U}(x_{K-1})}\!\! \left\{\alpha (u_{K-1}^{(\rm{lo})})^3 + \lambda\frac{(u_{K-1}^{(\rm{of})})^3}{x_{K-1}^{(H)}} + \alpha \l(x_{K-1}^{(L)}-u_{K-1}^{(\rm{lo})}-u_{K-1}^{(\rm{of})}\r)^3 S_{K-1}(x_{K-1}^{(C)},x_{K-1}^{(H)})\right\}.
\end{equation*}
Using the Lagrange method leads to the following optimal local computing and offloading policy:
\begin{equation}\label{eq:No_K-1_plcy_i}
 \!u_{K-1}^{(\rm{lo})*}=\l(1+ \frac{1}{\sqrt{S_{K-1}(x_{K-1}^{(C)},x_{K-1}^{(H)})}}+\sqrt{\frac{\alpha}{\lambda} x_{K-1}^{(H)}}\r)^{-1}x_{K-1}^{(L)},~~\text{and}~~u_{K-1}^{(\rm{of})*}=\sqrt{\frac{\alpha}{\lambda} x_{K-1}^{(H)}}u_{K-1}^{(\rm{lo})*}.\!\!
\end{equation}
The corresponding minimum expected user-energy consumption is:
\begin{equation}\label{eq:JK-1_i}
  J_{K-1}(x_{K-1})=\alpha (x_{K-1}^{(L)})^3 \l(1+ \frac{1}{\sqrt{S_{K-1}(x_{K-1}^{(C)},x_{K-1}^{(H)})}}+\sqrt{\frac{\alpha}{\lambda} x_{K-1}^{(H)}}\r)^{-2}, ~\text{if}~x_{K-1}^{(C)}=1.
\end{equation}
Then combining \eqref{eq:2JK-1_b} and \eqref{eq:No_K-1_plcy_i} leads to the optimal offloading-control policy in  slot $(K-1)$:
\begin{align}\label{eq:No_K-1_plcy}
 & u_{K-1}^{(\rm{lo})*}=\l(1+ \frac{1}{\sqrt{S_{K-1}(x_{K-1}^{(C)},x_{K-1}^{(H)})}}+\sqrt{\dfrac{\alpha}{\lambda}}x_{K-1}^{(C)} \sqrt{x_{K-1}^{(H)}}\r)^{-1} x_{K-1}^{(L)},\\
  &u_{K-1}^{(\rm{of})*}=\sqrt{\dfrac{\alpha}{\lambda}}x_{K-1}^{(C)} \sqrt{x_{K-1}^{(H)}}u_{K-1}^{(\rm{lo})*}.
\end{align}
The minimum expected user-energy consumption for slot $(K-1)$ to $K$ can be derived as below by combining \eqref{eq:2J_K-1_b} and \eqref{eq:JK-1_i}:
\begin{align}\label{eq:JK-1}
  J_{K-1}(x_{K-1})=\alpha (x_{K-1}^{(L)})^3 \l(1+ \frac{1}{\sqrt{S_{K-1}(x_{K-1}^{(C)},x_{K-1}^{(H)})}} +\sqrt{\dfrac{\alpha}{\lambda}}x_{K-1}^{(C)} \sqrt{x_{K-1}^{(H)}}\r)^{-2}.
\end{align}
Observe that the expected user-energy consumption from slot ($K-1$) and $K$, given in \eqref{eq:JK-1}, has the similar form with that in slot $K$ as given in \eqref{Eq:ConKEgy}.  Therefore, using the backward induction and the similar procedure as for deriving \eqref{eq:JK-1}, we can derive the optimal computation offloading policy in each slot $k$ and the minimum expected user-energy consumption, as expressed in Theorem~\ref{Theo:OptW/oBuf}, completing the proof.

\subsection{Proof of Lemma~\ref{Lem:Lb_cost}}\label{App:Lb_cost}
We prove this lemma by deriving the minimum expected user-energy for both the cases of $x_{K-1}^{(C)}=1$ and $x_{K-1}^{(C)}=0$ respectively, and then combining the results together. First, if $x_{K-1}^{(C)}=1$, we have  $F(x_{K-1}, u_{K-1}, x_{K}^{(C)})=(x_{K-1}^{(L)}-u_{K-1}^{(\rm{lo})}-u_{K-1}^{(\rm{of})})^3$ where $F(x_{K-1}, u_{K-1}, x_{K}^{(C)})$ is defined in \eqref{EqF}, and thus
\begin{equation}\label{eq:inf_barJK_i}
\E\l[ J_K(X_K)\big|x_{K-1},u_{K-1} \r] = \alpha (x_{K-1}^{(L)}-u_{K-1}^{(\rm{lo})}-u_{K-1}^{(\rm{of})})^3 \times S(x_{K-1}^{(C)},x_{K-1}^{(H)}), ~~\text{if}~ x_{K-1}^{(C)}=1,
\end{equation}
where $S(x_{K-1}^{(C)},x_{K-1}^{(H)})$ is defined in \eqref{eq:Large_xi_K-1}.
Next, if $x_{K-1}^{(C)}=0$, the newly-offloaded data in slot $(K-1)$ is stored at the buffer and thus $$F(x_{K-1}, u_{K-1}, x_{K}^{(C)})\!=\!\l[(x_{K-1}^{(L)}-u_{K-1}^{(\rm{lo})}-u_{K-1}^{(\rm{of})})+ (1-x_K^{(C)})(x_{K-1}^{(Q)}+u_{K-1}^{(\rm{of})})\r]^3.$$ Then it follows
\begin{align}\label{eq:largeK_approx}
\!\!\!&\E\l[ J_K(X_K)\big|x_{K-1},u_{K-1} \r] \nn \\
  & \overset{(\text{i})}{\ge} \alpha \E\l[F(x_{K-1}, u_{K-1}, X_{K}^{(C)}) \bigg|x_{K-1},u_{K-1} \r] \times \E\l[\l(1+\sqrt{\dfrac{\alpha}{\lambda}}X_{K}^{(C)}\sqrt{X_{K}^{(H)}}\r)^{-2} \bigg| x_{K-1}\r] \nn \\
  & \overset{(\text{ii})}{\ge} \alpha\l(\E\l[(x_{K-1}^{(L)}-u_{K-1}^{(\rm{lo})}-u_{K-1}^{(\rm{lo})})+ (1-X_K^{(C)})(x_{K-1}^{(Q)}+u_{K-1}^{(\rm{of})}) \bigg|x_{K-1},u_{K-1} \r]\r)^3 \times S(x_{K-1}^{(C)},x_{K-1}^{(H)}) \nn  \\
  &\overset{(\text{iii})}{=} \alpha \l(x_{K-1}^{(L)}-u_{K-1}^{(\rm{lo})}-u_{K-1}^{(\rm{lo})})+P_{00}(x_{K-1}^{(Q)}+u_{K-1}^{(\rm{of})})\r)^3 \times S(x_{K-1}^{(C)},x_{K-1}^{(H)}), \quad\text{if}~~x_{K-1}^{(C)}=0,\!\!
\end{align}
where (i) is due to that $F(x_{K-1}, u_{K-1}, X_{K}^{(C)})$ and $\l(1+\sqrt{\dfrac{\alpha}{\lambda}}X_{K}^{(C)}\sqrt{X_{K}^{(H)}}\r)^{-2}$ are positively correlated; (ii) follows from Jensen's inequality; and (iii) is due to that $\E[(1-X_K^{(C)})| x_{K-1}, u_{K-1}]=P_{00}$, if $x_{K-1}^{(C)}=0$.
Last, combining the results of \eqref{eq:inf_barJK_i} and \eqref{eq:largeK_approx} and using the definition of $V(x_{K-1}^{(C)})$ in \eqref{Eq:V} lead to the desired results.

\subsection{Proof of Lemma~\ref{The:LargK-1}}\label{App:LargK-1}
We prove this lemma by solving Problem P5 for both the cases of $x_{K-1}^{(C)}=1$ and $x_{K-1}^{(C)}=0$ respectively, and then combining the results together. First, if $x_{K-1}^{(C)}=1$, the optimal solution for solving Problem P5 is same as that of zero-buffer case, given as
\begin{equation}\label{Eq:Th2X1}
u_{K-1}^{(\rm{lo})*}\!=\!x_{K-1}^{(L)}\!\!\l(1\!+\!\frac{1}{\sqrt{S(x_{K-1}^{(C)},x_{K-1}^{(H)})}}\!+\!\sqrt{\frac{\alpha}{\lambda}x_{K-1}^{(H)}}\!\r)^{-1} \text{and} \quad
u_{K-1}^{(\rm{of})*}\!=\!\sqrt{\frac{\alpha}{\lambda}x_{K-1}^{(H)}}u_{K-1}^{(\rm{lo})*}.
\end{equation}
The minimum expected user-energy consumption is
\begin{equation}\label{Eq:T2X1J}
J_{K-1}(x_{K-1})=\alpha (x_{K-1}^{(L)})^3 \l(1+\frac{1}{\sqrt{S(x_{K-1}^{(C)},x_{K-1}^{(H)})}}+\sqrt{\frac{\alpha}{\lambda}x_{K-1}^{(H)}}\r)^{-2},\quad \text{if} \quad x_{K-1}^{(C)}=1.
\end{equation}
Next, if $x_{K-1}^{(C)}=0$, we have $\widetilde{F}(x_{K-1},u_{K-1})=\l[(x_{K-1}^{(L)}\!-\!u_{K-1}^{(\rm{lo})}\!-\!u_{K-1}^{(\rm{of})})+P_{00}(x_{K-1}^{(Q)}+u_{K-1}^{(\rm{of})})\r]^3$. Substituting it into Problem P5 and solving it by the Lagrangian method, the optimal policy for Problem P5 is
\begin{equation}\label{eq:plcy_lrg_K-1_b}
\begin{split}
  u_{K-1}^{(\rm{lo})*}&\!=\!\l(x_{K-1}^{(L)}+P_{00}x_{K-1}^{(Q)}\r)\l[1+\dfrac{1}{\sqrt{S(x_{K-1}^{(C)},x_{K-1}^{(H)})}} +\sqrt{\dfrac{\alpha}{\lambda}}(1-P_{00})^{\frac{3}{2}}\sqrt{x_{K-1}^{(H)}}\r]^{-1},\\
u_{K-1}^{(\rm{of})*}&=\sqrt{\dfrac{\alpha}{\lambda}x_{K-1}^{(H)}(1-P_{00})} u_{K-1}^{(\rm{lo})*},
\end{split}
\end{equation}
The corresponding minimum expected user-energy consumption is
\begin{equation}\label{eq:inf_J_K-1_b}
\begin{split}
J_{K-1}(x_{K-1})&=\alpha \l(x_{K-1}^{(L)}+P_{00}x_{K-1}^{(Q)}\r)^3\\
&~~\l(1+\frac{1}{\sqrt{\hat{S}_{K-1}(x_{K-1}^{(C)},x_{K-1}^{(H)})}}+(1-P_{00})^{\frac{3}{2}}\sqrt{\frac{\alpha}{\lambda} x_{K-1}^{(H)}}\r)^{-2},\quad\text{if}\quad x_{K-1}^{(C)}=0.\!
\end{split}
\end{equation}
Last, combining the policies for the cases of $x_{K-1}^{(C)}=1$  (see \eqref{Eq:Th2X1}) and $x_{K-1}^{(C)}=0$ (see \eqref{eq:plcy_lrg_K-1_b}) gives the sub-optimal policy in slot $(K-1)$, as presented in the lemma. The expected user-energy consumption in slot $(K-1)$ can be expressed  as \eqref{eq:inf_J_K-1} by combining \eqref{Eq:T2X1J} and \eqref{eq:inf_J_K-1_b}, completing the proof.

\subsection{Proof of Lemma~\ref{lem:exist_Q_th}}\label{App:exist_Q_th}
Assume that both $G_1(Q)$ and $G_2(Q)$ are continuous functions with $Q$. First, when $Q=0$, we can obtain that $G_1(0)<G_2(0)$, since it corresponds to the case of zero buffer and ZBP leads to the optimal policy. On the other hand, when $Q=D$, we have $G_1(D)>G_2(D)$ as it refers to the case of large buffer and TLBP yields the better solution. Then, applying the intermediate-value theorem leads to the desired result.

\renewcommand\refname{Reference}

\end{document}